\newcommand{\fimdem}{$\hfill \blacksquare$}
\newcommand{\nd}{\noindent}
\newtheorem{teor}{Theorem}[section]
\newtheorem{pro}{Proposition}[section]
\newtheorem{corol}{Corollary}[section]
\newtheorem{lema}{Lemma}[section]
\newtheorem{Exa}{Example}[section]
\newenvironment{proof}[1][Proof]{\noindent\textbf{#1.} }{\ \rule{0.5em}{0.5em}}
\begin{document}
\newcommand{\bfA}{\mathbf{A}}
\newcommand{\C}{\mathbb{C}}
\newcommand{\R}{\mathbb{R}}
\newcommand{\N}{\mathbb{N}}
\newcommand{\bfgr}{\boldsymbol{\nabla}}
\newcommand{\bfal}{\boldsymbol{\alpha}}
\newcommand{\bfpi}{\boldsymbol{\pi}}
\newcommand{\bfta}{\boldsymbol{\tau}}
\newcommand{\bfr}{\mathbf{r}}
\newcommand{\bfq}{\mathbf{q}}
\newcommand{\e}{\mathrm{e}}
\def\d{{\rm d}}
\def\tr{{\rm tr}}
\def\D{\rho}
\def\Tr{{\rm Tr}}
\def\hS{{\hat S}}
\def\al{\alpha}
\def\be{\beta}
\def\la{\lambda}
\def\ga{\gamma}
\font\titlefont=cmbx10 scaled\magstep1
\title{
Inequalities in
R\'enyi's
quantum thermodynamics}
\author{
N. Bebiano\footnote{ CMUC, University of Coimbra, Department of
Mathematics, P 3001-454 Coimbra, Portugal (bebiano@mat.uc.pt)},
 J.~da Provid\^encia\footnote{CFisUC, University of Coimbra, Department of
Physics, P 3004-516 Coimbra, Portugal (providencia@teor.fis.uc.pt)}~
and J.P. da
Provid\^encia\footnote{Department of Physics, Univ. of Beira
Interior, P-6201-001 Covilh\~a, Portugal
(joaodaprovidencia@daad-alumni.de)}
}
\maketitle
\begin{abstract}
A theory of thermodynamics has been recently formulated
and derived on the basis of R\'enyi entropy and its relative versions.
In this framework, we define the concepts of partition function, internal energy and
free energy, and fundamental quantum thermodynamical
inequalities are deduced. In the context of R\'enyi's thermodynamics,
the variational Helmholtz principle
is stated and the condition of equilibrium is analyzed.
The R\'enyi maximum entropy principle is formulated and the equality case is discussed.
The obtained results reduce to the von Neumann ones when the R\'enyi entropic parameter $\alpha$ approaches 1.
The Heisenberg and Schr\"odinger uncertainty principles on the measurements
of quantum observables are revisited.
The presentation is self-contained and the proofs only use
standard matrix analysis techniques.
\end{abstract}

{\bf Keywords:} R\'enyi entropy, R\'enyi relative entropy,
partition function, Helmholtz free energy,
uncertainty principles,

\section{Introduction}

A complete theory of thermodynamics has been recently formulated
and derived on the basis of the R\'enyi entropy and its relative version \cite{misra},
which are crucial, for instance, in defining the laws of
quantum thermodynamics at microscopic level. This fact is a relevant manifestation
of the incidence of information theory concepts in thermodynamics when
extended to the quantum context.

We consider a quantum system possessing a given Hamiltonian $H,$
defined in a complex Hilbert space
with finite dimension, and being described by an arbitrary
{\it density matrix} $\rho$, i.e., a positive definite matrix with trace 1.
In statistical physics, isolated systems
are described by {\it microcanonical ensembles} and systems in equilibrium with a heat
bath are described by {\it canonical ensembles}.
The {canonical ensemble} is not adequate for the statistical description of
systems with a small number of particles compared with Avogadro's number,
such as a DNA molecule, while the microcanonical ensemble is hard to handle.
This fact explains the interest on statistical descriptions based on
different definitions of entropy from the von Neumann entropy, such as the Tsallis
 or the R\'enyi's entropies.

According to classical thermodynamics, the entropy of a thermally
isolated system is maximal for the equilibrium state ({\it maximum entropy principle}).
The Helmholtz free energy of a system in thermal contact with its environment,
or with a heat bath characterized by a temperature $T$, is minimal for the equilibrium
state ({\it minimum free energy principle}).

This paper is organized as follows. In Section \ref{S2} we define the R\'enyi internal energy and the R\'enyi entropy of a physical
system in terms of the density matrix $\rho$, and, in accordance with the principles of thermodynamics,
we determine the state of equilibrium of the system by minimizing, at constant
temperature, the Helmholtz free energy. In section \ref{S3}, the close relation between
the R\'enyi relative entropy and the Helmholtz free energy is discussed.
Since the state described by the density matrix $\rho$ is completely arbitrary, it is not
characterized by a well defined temperature. In Section \ref{S4},
we investigate the relation
between the partition function and the internal energy,
for arbitrary temperature.
In Section \ref{S5}, the R\'enyi maximum entropy principle is formulated.
In Section \ref{S6}, the uncertainty principle on the measurements of
quantum observables is revisited. In Section \ref{S7}, the obtained results
are discussed and some open problems are formulated.

\section{R\'enyi's entropies}\label{S2}
\subsection{General properties}
Let $M_n$ be the matrix algebra of $n\times n$ matrices with complex entries and
$H_n$ the vector space of Hermitian matrices, named in physics as {\it observables}. By $H_{n,+}$ we denote
the cone  of Hermitian positive definite matrices and $H_{n,+,1}$ consists of
positive Hermitian matrices with unit trace, called the {\it state space}.
This set coincides with the class of {\it density matrices}
acting on an $n\times n$ quantum system,
and we use the terms state and density matrix synonymously.
Matrices in $H_{n,+}$ with rank one describe
{\it pure} states and those with rank greater than one represent {\it mixed} states.

Throughout we use the conventions $0\log0=0,~\log0=-\infty$ and $\log\infty=\infty.$
For a density matrix $\rho$ with eigenvalues $\rho_1\geq\ldots\geq\rho_n,$
the $\alpha$-{\it R\'enyi  entropy} \cite{misra} is defined as
\begin{equation}\label{Salpha}
S_\alpha(\rho)=:{\log\Tr\rho^\alpha\over 1-\alpha}={\log\sum_{i=1}^n\rho_i^\alpha\over1-\alpha},\quad\alpha\in(0,1)\cup(1,\infty).
\end{equation}
If $\alpha>1$, then $\Tr\rho^\alpha<1$ and so $\log\Tr\rho^\alpha<0.$  If $\alpha<1$,
we have $\Tr\rho^\alpha>1$ and consequently $\log\Tr\rho^\alpha>0.$
Hence, $S_\alpha(\rho)\geq0$ for any $\rho$, and equality holds if and only if $\rho$ is a pure state.
For $\rho_1=\ldots =\rho_n=1/n$,
we obtain $S_\alpha(\rho)=\log n$, which is the maximum possible value of $S_\alpha(\rho)$. Therefore,
$$0\leq S_\alpha(\rho)\leq\log n.$$

To avoid dividing by zero in (\ref{Salpha}), we consider $\alpha\neq1,$ but l'H\^opital rule
shows that the $\alpha$-{R\'enyi entropy} approaches the Shannon entropy $S_1$ \cite{Shannon '49} as $\alpha$
approaches 1:
$$S_1(\rho)=\lim_{\alpha\rightarrow1}S_\alpha(\rho)=-\Tr\rho\log\rho.$$

The special cases $\alpha=0$ and $\alpha=\infty$ may be defined by taking the limit.
In physics, many uses of R\'enyi entropy involve the limiting cases $S_0(\rho)=\lim_{\alpha\rightarrow0}S_\alpha(\rho)$
and  $S_\infty(\rho)=\lim_{\alpha\rightarrow\infty}S_\alpha(\rho)$, known as
``max-entropy" and ``min-entropy", as $S_\alpha(\rho)$ is a monotonically decreasing function of $\alpha:$
$$S_\alpha(\rho)\leq S_{\alpha'}(\rho)~~\text{for}~~\alpha< \alpha'.$$
Min-entropy is the smallest entropy measure in the class of R\'enyi entropies and
it is the strongest measure of information content of  a discrete quantum variable.
It is never larger than the Shannon entropy $S_1$.

A function $g:H_n\rightarrow\R$, is {\it  concave}
if, for $A_1,A_2\in H_n$, $0\leq p\leq1$, the following holds,
$$g(pA_1+(1-p)A_2)\geq p g(A_1)+(1-p)g(A_2).$$
\begin{teor}
R\'enyi's entropy map $S_\alpha: H_{n,+,1}\rightarrow\R$ for $0<\alpha<1$ is concave.
\end{teor}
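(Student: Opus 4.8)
The plan is to peel off the outer logarithm and the positive prefactor, reducing the claim to the concavity of the single trace functional $g(\rho):=\Tr\rho^\alpha$ on the (convex) set $H_{n,+,1}$. Since $0<\alpha<1$ we have $1/(1-\alpha)>0$, and $\log$ is concave and strictly increasing on $(0,\infty)$. I would use the composition rule that a nondecreasing concave function applied to a concave function is again concave: once $g$ is known to be concave with $g(\rho)>0$, then for $C=p\rho_1+(1-p)\rho_2$,
$$\log g(C)\ \ge\ \log\bigl(p\,g(\rho_1)+(1-p)\,g(\rho_2)\bigr)\ \ge\ p\log g(\rho_1)+(1-p)\log g(\rho_2),$$
the first inequality using monotonicity of $\log$ together with concavity of $g$, the second using concavity of $\log$. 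Multiplying by the positive constant $1/(1-\alpha)$ then yields concavity of $S_\alpha$. The positivity $g(\rho)=\Tr\rho^\alpha>0$ is automatic, since $\rho>0$ forces $\rho^\alpha>0$.

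It therefore remains to prove that $g(\rho)=\Tr\rho^\alpha$ is concave, and this is where the real content lies. The subtlety is that although the scalar function $t\mapsto t^\alpha$ is concave on $(0,\infty)$ for $\alpha\in(0,1)$, this alone does \emph{not} give concavity of the trace, because the eigenvalues of $p\rho_1+(1-p)\rho_2$ are not the convex combinations of those of $\rho_1$ and $\rho_2$. The matrix-analytic input I would invoke is Peierls' inequality: for a concave scalar function $f$ and any orthonormal basis $\{u_i\}$ of $\C^n$,
$$\sum_{i=1}^n f\bigl(\langle u_i,A u_i\rangle\bigr)\ \ge\ \Tr f(A),\qquad A\in H_{n,+}.$$
This itself follows from scalar Jensen applied to the spectral resolution $A=\sum_j a_j\,v_jv_j^{*}$: for each fixed $i$ the numbers $|\langle u_i,v_j\rangle|^2$ form a probability distribution in $j$, and the sum over $i$ uses $\sum_i|\langle u_i,v_j\rangle|^2=1$.

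With Peierls' inequality available, I would finish as follows. Fix $\rho_1,\rho_2\in H_{n,+,1}$ and $0\le p\le1$, set $C=p\rho_1+(1-p)\rho_2$, and let $\{u_i\}$ be an orthonormal eigenbasis of $C$. Writing $f(t)=t^\alpha$ and using $\langle u_i,Cu_i\rangle=p\langle u_i,\rho_1u_i\rangle+(1-p)\langle u_i,\rho_2u_i\rangle$ together with scalar concavity of $f$,
$$\Tr C^\alpha=\sum_i f\bigl(\langle u_i,Cu_i\rangle\bigr)\ \ge\ p\sum_i f\bigl(\langle u_i,\rho_1u_i\rangle\bigr)+(1-p)\sum_i f\bigl(\langle u_i,\rho_2u_i\rangle\bigr),$$
after which Peierls' inequality applied to $\rho_1$ and to $\rho_2$ bounds the two sums below by $\Tr\rho_1^\alpha$ and $\Tr\rho_2^\alpha$, giving $g(C)\ge p\,g(\rho_1)+(1-p)\,g(\rho_2)$. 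An alternative route to concavity of $g$ is to establish the stronger operator concavity of $t\mapsto t^\alpha$ on $(0,1)$ from the integral representation $A^\alpha=\tfrac{\sin\alpha\pi}{\pi}\int_0^\infty\lambda^{\alpha-1}A(A+\lambda I)^{-1}\,d\lambda$ combined with operator convexity of the inverse, and then apply the trace, which preserves the Loewner order. I expect the main obstacle to be exactly this concavity of the trace power functional; once it is secured, the logarithmic composition step is routine.
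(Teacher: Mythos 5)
Your proposal is correct and follows essentially the same route as the paper, whose entire proof is the one-line remark that the claim is ``a simple consequence of the concavity of both $x^\alpha$, for $\alpha<1$, and $\log x$'': you reduce to concavity of $\rho\mapsto\Tr\rho^\alpha$ and then compose with the increasing concave function $\log$ and the positive factor $1/(1-\alpha)$, exactly as the paper intends. The only difference is that you actually prove the step the paper leaves implicit --- that scalar concavity of $x^\alpha$ transfers to the trace functional --- via Peierls' inequality, which is a genuine (and correctly handled) subtlety rather than a triviality, so your writeup is a rigorous completion of the paper's sketch rather than a departure from it.
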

\begin{proof}
This is a simple consequence of the concavity of both  $x^\alpha$, for $\alpha<1$,
and $\log x$.
\end{proof}

For $\alpha>1$, $x^\alpha$ is convex, so $S_\alpha(\rho)$ is neither purely convex nor concave.

The $\alpha$-{\it R\'enyi relative entropy} ($\alpha$-RRE) \cite{renyi} between two quantum states
$\rho\in H_{n,+,1}$ and $\sigma\in H_{n,+}$ is defined by
$$\mathcal{D}_\alpha(\rho\|\sigma)={\log\Tr(\rho^\alpha\sigma^{1-\alpha})\over\alpha-1},~~
\alpha\in(0,1)\cup(1,\infty) .$$
The special cases $\alpha=1$ and $\alpha= \infty$ are defined taking the limit.

The $\alpha$-RRE satisfies
$$D_\alpha(U^*\rho U\|U^*\sigma U)=D_\alpha(\rho\|\sigma)$$
for all unitary matrices $U$. If $\rho$ and $\sigma$ commute they are simultaneously diagonalizable and so
$$D_\alpha(\rho\|\sigma)={\sum_{i=1}^n\rho_i^\alpha\sigma_i^{1-\alpha}\over\alpha-1},$$
where $\rho_i$ and $\sigma_i$ are respectively the eigenvalues (with simultaneous eigenvectors)
of $\rho$ and $\sigma$. 

Computing  $\Tr(\rho^\alpha\sigma^{1-\alpha})$ for small values of $1-\alpha,$ we find
\begin{eqnarray*}&&\Tr(\rho^\alpha\sigma^{1-\alpha})=\Tr \e^{\alpha\log\rho}\e^{(1-\alpha)\log\sigma}\\
&&=\Tr\e^{\log\rho} \e^{(\alpha-1)\log\rho}\e^{(1-\alpha)\log\sigma}\\
&&=\Tr\rho(1+(\alpha-1)(\log\rho-\log\sigma)+\mathcal{O}((1-\alpha)^2))\\
&&=1+(\alpha-1)\Tr\rho(\log\rho-\log\sigma)+\mathcal{O}((1-\alpha)^2).\end{eqnarray*}
Thus, $\mathcal{D}_\alpha(\rho\|\sigma)=\Tr\rho(\log\rho-\log\sigma)+\mathcal{O}((1-\alpha)),$
and so when $\alpha\rightarrow1,$ one obtains the {\it von Neumann relative entropy} \cite{von neumann}:
$$\mathcal{D}_1(\rho\|\sigma)=\Tr\rho(\log\rho-\log\sigma).$$

A map $g:H_{n}\times H_{n}\rightarrow\R$, is {\it jointly convex},
if, for $A_1,A_2,B_1,B_2\in H_n$, $0\leq\lambda\leq1$, the following holds,
$$g(\lambda A_1+(1-\lambda)A_2,\lambda B_1+(1-\lambda)B_2)\leq
\lambda g(A_1,B_1)+(1-\lambda)g(A_2,B_2),$$
and $g$ is {\it jointly concave} if $-g$ is jointly convex.

The joint convexity of $\alpha$-RRE for $\alpha\in(0,1)$
is one of its most important properties.
This result was obtained
obtained in \cite{hiai} in a more general context, and
next we give a simple proof. For this purpose,
Lieb's joint concavity Theorem \cite{ruskai} stated in the following Lemma, is needed.
\begin{lema}\label{lieb}
For all matrices $K\in M_n$, $A,B\in H_{n,+}$ and all
$q,r$ such that $0\leq q\leq 1$, $0\leq r\leq 1$ with $q+r\leq1,$
 the real valued function
$$\Tr K^*A^qKB^r$$
is jointly concave in $A,B.$
\end{lema}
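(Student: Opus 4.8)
The plan is to reduce this scalar joint concavity statement to an operator joint concavity statement on a tensor-product space, and then to prove the latter by splitting into a boundary case ($q+r=1$) and an interior case ($q+r<1$). First I would linearize the trace functional by vectorization. Writing $\mathrm{vec}(K)$ for the column-stacking of $K$ and using the identities $\mathrm{vec}(A^q K B^r)=((B^r)^{T}\otimes A^q)\,\mathrm{vec}(K)$ and $\Tr(X^*Y)=\mathrm{vec}(X)^*\mathrm{vec}(Y)$, one obtains
$$\Tr(K^* A^q K B^r)=\mathrm{vec}(K)^*\bigl((B^{T})^{r}\otimes A^q\bigr)\,\mathrm{vec}(K).$$
Since $M\mapsto x^* M x$ is linear and order-preserving for each fixed vector $x=\mathrm{vec}(K)$, it suffices to prove that the operator-valued map $(A,B)\mapsto A^q\otimes B^r$ is jointly concave in the L\"owner order; the transpose $B\mapsto B^{T}$ is a linear, positivity-preserving bijection of $H_n$, so replacing $B$ by $B^{T}$ does not affect joint concavity.

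For the boundary case $q+r=1$, I would pass to the commuting operators $\mathbb A=A\otimes I$ and $\mathbb B=I\otimes B$ on $\C^n\otimes\C^n$, so that $A^q\otimes B^r=\mathbb A^{\,q}\mathbb B^{\,r}$. Because $\mathbb A$ and $\mathbb B$ commute and $q+r=1$, this product coincides with the weighted operator geometric mean $\mathbb A\,\#_{r}\,\mathbb B=\mathbb A^{1/2}(\mathbb A^{-1/2}\mathbb B\mathbb A^{-1/2})^{r}\mathbb A^{1/2}$ (here $A,B\in H_{n,+}$ guarantees invertibility). The joint concavity of operator means in the Kubo--Ando theory then yields joint concavity in $(\mathbb A,\mathbb B)$, and since $A\mapsto A\otimes I$ and $B\mapsto I\otimes B$ are linear, joint concavity of $(A,B)\mapsto A^q\otimes B^r$ follows.

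For the interior case $q+r<1$, I would use the factorization $A^q\otimes B^r=\bigl(A^{q/(q+r)}\otimes B^{r/(q+r)}\bigr)^{q+r}$, valid because powers distribute over tensor products. The inner map $G(A,B)=A^{q/(q+r)}\otimes B^{r/(q+r)}$ has exponents summing to $1$, hence is jointly concave by the boundary case. The outer function $f(X)=X^{q+r}$ with $0<q+r<1$ is operator monotone and operator concave (L\"owner), and a standard composition argument---applying an operator-monotone, operator-concave scalar function to a jointly concave operator map preserves joint concavity---completes the proof.

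The main obstacle is the operator joint concavity of $(A,B)\mapsto A^q\otimes B^r$, and within it the boundary case, which rests on the joint concavity of weighted operator geometric means; everything else is bookkeeping (the vectorization identity, the transpose reduction, and the monotone-concave composition). If one wishes to avoid invoking the Kubo--Ando machinery as a black box, the geometric-mean concavity can be obtained directly from the Schur-complement characterization of $A\#B$ as the largest Hermitian $X$ with $\begin{pmatrix} A & X \\ X & B \end{pmatrix}\ge 0$, which makes concavity transparent by adding positive semidefinite block matrices.
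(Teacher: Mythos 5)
Your proof is correct, but it cannot be matched against a proof in the paper, because the paper contains none for this statement: Lemma \ref{lieb} is invoked as Lieb's joint concavity theorem, with a citation to the literature (Ruskai), and is then used as a black box in the proof of the joint convexity of the $\alpha$-RRE (Theorem \ref{conc}). What you have reconstructed is essentially Ando's tensor-product proof of Lieb's theorem, and all of its steps are sound. The vectorization identity $\Tr(K^*A^qKB^r)=\mathrm{vec}(K)^*\bigl((B^T)^r\otimes A^q\bigr)\mathrm{vec}(K)$ correctly reduces the scalar claim to joint concavity of $(A,B)\mapsto A^q\otimes B^r$ in the L\"owner order, the transpose being harmless since $B\mapsto B^T$ is linear and preserves positivity. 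For $q+r=1$, the identification $A^q\otimes B^r=(A\otimes I)\,\#_r\,(I\otimes B)$ is valid precisely because the two tensor factors commute, and restricting the jointly concave weighted geometric mean to the linear family $\{(A\otimes I,\,I\otimes B)\}$ preserves joint concavity. For $q+r<1$, the factorization $A^q\otimes B^r=\bigl(A^{q/(q+r)}\otimes B^{r/(q+r)}\bigr)^{q+r}$ together with operator monotonicity and operator concavity of $X\mapsto X^{q+r}$ closes the argument by the standard composition lemma; positive definiteness of $A,B$ (the paper's $H_{n,+}$) supplies the invertibility you use, and the degenerate cases $q=0$ or $r=0$ are covered because $\#_0$ and $\#_1$ are the trivial, linear means. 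Two caveats worth recording: your Schur-complement shortcut, as stated, characterizes only the midpoint mean $A\,\#_{1/2}\,B$, so for general $r$ one must iterate it over dyadic weights and pass to limits, or else keep the Kubo--Ando integral representation as the key input; and the composition lemma needs both monotonicity and concavity of the outer function, which $t\mapsto t^{q+r}$ indeed has for $0\le q+r\le 1$. The net effect of your route is to make the paper self-contained at exactly the one point where it defers to an external reference, at the modest price of importing L\"owner theory and the joint concavity of operator means.
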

\begin{teor}\label{conc}The map ${\cal D}_\alpha:H_{n,+,1}\times H_{n,+}\rightarrow \R$ is jointly convex for $\alpha\in(0,1)$.
\end{teor}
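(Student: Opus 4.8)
The plan is to deduce the statement directly from Lieb's joint concavity Theorem (Lemma~\ref{lieb}) by a suitable specialization, and then to track how concavity is transformed by the two outer operations, the logarithm and multiplication by $1/(\alpha-1)$. First I would apply Lemma~\ref{lieb} with $K=I$, $A=\rho$, $B=\sigma$, $q=\alpha$ and $r=1-\alpha$. Since $\alpha\in(0,1)$, both exponents lie in $(0,1)$ and satisfy $q+r=1\leq 1$, so the hypotheses are met and the map
$$f(\rho,\sigma)=\Tr(\rho^\alpha\sigma^{1-\alpha})$$
is jointly concave on $H_{n,+,1}\times H_{n,+}$. Because $\rho$ and $\sigma$ are positive definite, $f(\rho,\sigma)>0$ throughout, so the composition with $\log$ is well defined.

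Next I would invoke the standard composition rule that if $h$ is concave and nondecreasing and $g$ is concave, then $h\circ g$ is concave. Applying this with $h=\log$, which is concave and increasing on $(0,\infty)$, and $g=f$, shows that $(\rho,\sigma)\mapsto\log\Tr(\rho^\alpha\sigma^{1-\alpha})$ is jointly concave. Finally, since $\alpha\in(0,1)$ forces $1/(\alpha-1)<0$, multiplying a jointly concave function by this negative constant yields a jointly convex function; as this function is precisely $\mathcal{D}_\alpha(\rho\|\sigma)$, joint convexity follows.

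There is no deep obstacle here beyond Lieb's theorem itself, which is assumed. The two points that genuinely require care are the bookkeeping of signs and the correct use of the composition rule: the factor $1/(\alpha-1)$ is negative exactly because $\alpha<1$, and it is this sign flip that converts the concavity supplied by Lieb into the asserted convexity, so the restriction $\alpha\in(0,1)$ is used twice, once to place the exponents in the admissible range of Lemma~\ref{lieb} and once to make the prefactor negative. If one wished to avoid quoting the composition lemma, the same conclusion can be obtained by writing out the Jensen-type inequality for $\log f$ directly, combining the concavity of $f$ with the concavity and monotonicity of $\log$.
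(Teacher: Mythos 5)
Your proposal is correct and takes essentially the same route as the paper: both specialize Lieb's joint concavity theorem (Lemma~\ref{lieb}) with $K=I_n$, $q=\alpha$, $r=1-\alpha$ to obtain joint concavity of $(\rho,\sigma)\mapsto\Tr(\rho^\alpha\sigma^{1-\alpha})$, and then pass to $\mathcal{D}_\alpha$ by a composition argument. The only cosmetic difference is that you compose with $\log$ (increasing, concave) and then multiply by the negative constant $1/(\alpha-1)$, whereas the paper composes in a single step with the decreasing convex function $x\mapsto\log x/(\alpha-1)$; the two bookkeepings are equivalent.
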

\begin{proof}
Consider in Lemma \ref{lieb},  $r=1-\alpha$,  $q=\alpha$, $\alpha\in(0,1)$ and $K=I_n$.
For $\rho_1,\rho_2\in H_{n,+,1}$, $\sigma_1,\sigma_2\in H_{n,+},$
$0\leq\lambda\leq1$, and the real valued function
$$g(\rho,\sigma)=\Tr\rho^\alpha\sigma^{1-\alpha},$$
the lemma ensures that
$$g(\lambda\rho_1+(1-\lambda)\rho_2,\lambda\sigma_1+(1-\lambda)\sigma_2)\leq
\lambda g(\rho_1,\sigma_1)+(1-\lambda)g(\rho_2,\sigma_2).$$
Since $\log x/(\alpha-1)$ for $\alpha\in(0,1)$ is a decreasing and convex function of $x$, we get
\begin{eqnarray*}&&{\log(g(\lambda\rho_1+(1-\lambda)\rho_2,\lambda\sigma_1+(1-\lambda)\sigma_2))\over\alpha-1}\leq
{\log(\lambda g(\rho_1,\sigma_1)+(1-\lambda)g(\rho_2,\sigma_2))\over\alpha-1}\\
&&\leq
{\lambda\log g(\rho_1,\sigma_1)\over\alpha-1}+{(1-\lambda)\log g(\rho_2,\sigma_2)\over\alpha-1}
,\end{eqnarray*}
and the result follows.
\end{proof}
\begin{corol}The von Neumann map $D_1(\rho\|\sigma):H_{n,+,1}\times H_{n,+,1}\rightarrow\R$ is jointly convex.
\end{corol}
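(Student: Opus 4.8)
The plan is to obtain the joint convexity of $\mathcal{D}_1$ as a limiting case of Theorem \ref{conc}, exploiting the elementary fact that non-strict inequalities are preserved under pointwise limits. The two ingredients are already at hand: the pointwise convergence $\mathcal{D}_\alpha(\rho\|\sigma)\to\mathcal{D}_1(\rho\|\sigma)$ as $\alpha\to1$, established by the Taylor expansion preceding the statement, and the joint convexity of each $\mathcal{D}_\alpha$ with $\alpha\in(0,1)$ furnished by Theorem \ref{conc}.

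First I would observe that $H_{n,+,1}$ is a convex set: a convex combination of positive definite unit-trace matrices is again positive definite and of unit trace, by linearity of the trace. Hence $H_{n,+,1}\times H_{n,+,1}$ is a convex subset of the domain $H_{n,+,1}\times H_{n,+}$ on which Theorem \ref{conc} applies, and the restriction of a jointly convex map to a convex subset remains jointly convex. Therefore, for every $\alpha\in(0,1)$, arbitrary $\rho_1,\rho_2,\sigma_1,\sigma_2\in H_{n,+,1}$ and $0\leq\lambda\leq1$, one has
$$\mathcal{D}_\alpha(\lambda\rho_1+(1-\lambda)\rho_2\,\|\,\lambda\sigma_1+(1-\lambda)\sigma_2)\leq \lambda\,\mathcal{D}_\alpha(\rho_1\|\sigma_1)+(1-\lambda)\,\mathcal{D}_\alpha(\rho_2\|\sigma_2).$$

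Next I would let $\alpha\to1^-$ in this inequality. Since all matrices involved lie in $H_{n,+}$, the functions $\rho^\alpha$ and $\sigma^{1-\alpha}$ depend continuously on $\alpha$, so $\Tr(\rho^\alpha\sigma^{1-\alpha})$ and hence $\mathcal{D}_\alpha$ are continuous in $\alpha$, and the two-sided limit computed earlier shows that each of the three terms converges to the corresponding value of $\mathcal{D}_1$. Because a weak inequality passes to the limit, the same inequality holds with $\mathcal{D}_1$ in place of $\mathcal{D}_\alpha$, which is precisely the joint convexity of $\mathcal{D}_1$ on $H_{n,+,1}\times H_{n,+,1}$.

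The only point requiring care is the legitimacy of passing to the limit, i.e. confirming genuine pointwise convergence of $\mathcal{D}_\alpha$ to $\mathcal{D}_1$ at each fixed pair of arguments. In finite dimensions this is immediate: on the open cone $H_{n,+}$ the relevant matrix functions are continuous in $\alpha$, the trace is continuous, and the l'H\^opital/Taylor argument already given guarantees that the limit exists and equals $\mathcal{D}_1$. I do not expect any genuine obstacle here; the result is a soft consequence of Theorem \ref{conc} together with the closedness of the class of jointly convex functions under pointwise limits.
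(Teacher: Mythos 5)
Your proposal is correct and follows essentially the same route as the paper: the paper's own proof of this corollary simply takes the limit $\alpha\rightarrow1$ in Theorem \ref{conc} and invokes the fact that convexity is preserved under pointwise limits. Your additional verifications (convexity of $H_{n,+,1}$, continuity of $\Tr(\rho^\alpha\sigma^{1-\alpha})$ in $\alpha$, and the pointwise convergence $\mathcal{D}_\alpha\rightarrow\mathcal{D}_1$ via the Taylor expansion) merely make explicit what the paper leaves implicit.
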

\begin{proof}The result follows taking the limit $\alpha\rightarrow1$ in Theorem \ref{conc}
and recalling that convexity is  preserved in the limit.
\end{proof}
\subsection{A lower bound for $\alpha$-RRE}
The following result extends the well known non negativity
property of von Neumann relative entropy: $D_1(\rho\|\sigma)\geq0$ for $\rho,\sigma$ such that
$\Tr\rho=\Tr\sigma=1.$
\begin{teor}\label{T2.1}
Let $\sigma\in H_{n,+}.$ Then, for
$\rho$ ranging over  $H_{n,+,1},$
$$\mathcal{D}_\alpha(\rho\|\sigma)
\geq-\log\Tr\sigma,~~\alpha\in(0,1)\cup(1,\infty).$$
Equality occurs if and only if $\rho=\sigma/\Tr\sigma.$
\end{teor}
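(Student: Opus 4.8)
The plan is to separate the normalization of $\sigma$ from the genuinely informational content of the inequality. First I would set $\tau=\sigma/\Tr\sigma\in H_{n,+,1}$ and use the homogeneity $\sigma^{1-\alpha}=(\Tr\sigma)^{1-\alpha}\tau^{1-\alpha}$ to factor the trace,
$$\Tr(\rho^\alpha\sigma^{1-\alpha})=(\Tr\sigma)^{1-\alpha}\,\Tr(\rho^\alpha\tau^{1-\alpha}).$$
Taking logarithms and dividing by $\alpha-1$ then yields the identity
$$\mathcal{D}_\alpha(\rho\|\sigma)=-\log\Tr\sigma+\mathcal{D}_\alpha(\rho\|\tau).$$
Hence the claimed bound $\mathcal{D}_\alpha(\rho\|\sigma)\geq-\log\Tr\sigma$ is \emph{equivalent} to the non-negativity $\mathcal{D}_\alpha(\rho\|\tau)\geq0$ of the $\alpha$-RRE between the two \emph{states} $\rho$ and $\tau$, and the equality case $\rho=\sigma/\Tr\sigma$ becomes simply $\rho=\tau$. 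This reduces the theorem to the Rényi analogue of the Gibbs inequality.

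For the reduced statement I would establish the trace Young inequality $\Tr(\rho^\alpha\tau^{1-\alpha})\leq\alpha\Tr\rho+(1-\alpha)\Tr\tau=1$ for $\alpha\in(0,1)$, with the inequality reversed for $\alpha>1$, mirroring the scalar relation $a^\alpha b^{1-\alpha}\leq\alpha a+(1-\alpha)b$ (concavity of $x\mapsto x^\alpha$) and its convex counterpart. For $\alpha\in(0,1)$ this follows from the machinery already in place: by Lemma \ref{lieb} with $K=I_n$, $q=\alpha$, $r=1-\alpha$, the map $(A,B)\mapsto\Tr(A^\alpha B^{1-\alpha})$ is jointly concave, hence lies below each of its supporting hyperplanes, and a short differentiation at a diagonal point $(C,C)$ produces the tangent functional $\alpha\Tr A+(1-\alpha)\Tr B$, which is exactly the bound. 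Feeding $\Tr(\rho^\alpha\tau^{1-\alpha})\leq1$ into $\mathcal{D}_\alpha=\log(\cdot)/(\alpha-1)$ and tracking signs (the logarithm is non-positive and the denominator is negative) yields $\mathcal{D}_\alpha(\rho\|\tau)\geq0$; the case $\alpha>1$ is identical once the reversed trace inequality is in hand.

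The main obstacle is the non-commutativity of $\rho$ and $\tau$, which is exactly what forces an operator-level argument rather than termwise scalar Young, and it enters in two places. First, for $\alpha>1$ Lemma \ref{lieb} no longer applies, since its hypothesis demands $q=\alpha\leq1$; I would therefore obtain the reversed inequality separately, either by invoking the corresponding joint convexity of $(A,B)\mapsto\Tr(A^\alpha B^{1-\alpha})$ in the admissible range and taking a supporting hyperplane from below, or by pinching $\rho$ onto the eigenbasis of $\tau$ so that the two operators commute and the elementary scalar estimate applies, the pinching error being controlled by the operator Jensen inequality. Second, the equality analysis is delicate: strictness of scalar Young forces equality only when $\rho$ and $\tau$ share the same spectrum once made simultaneously diagonal, and one must argue that this in fact forces $\rho=\tau$ as operators, hence $\rho=\sigma/\Tr\sigma$. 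The reduction of the first paragraph then delivers the theorem, and the limit $\alpha\to1$ recovers the familiar inequality $\mathcal{D}_1(\rho\|\sigma)\geq-\log\Tr\sigma$.
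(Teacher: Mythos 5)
Your opening reduction is correct and clean: writing $\tau=\sigma/\Tr\sigma$ and using homogeneity to get $\mathcal{D}_\alpha(\rho\|\sigma)=-\log\Tr\sigma+\mathcal{D}_\alpha(\rho\|\tau)$ is exactly right, and your route for $\alpha\in(0,1)$ (Lieb's joint concavity plus the supporting-hyperplane computation at a point $(C,C)$, giving $\Tr(\rho^\alpha\tau^{1-\alpha})\leq\alpha\Tr\rho+(1-\alpha)\Tr\tau=1$) is sound and genuinely different from the paper, which instead runs a variational argument: stationarity of $\Tr(\rho^\alpha\sigma^{1-\alpha})$ under unitary conjugation forces $[\rho,\sigma]=0$, and Lagrange multipliers on the resulting diagonal problem identify $\rho_i=\sigma_i/\sum_j\sigma_j$. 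However, your treatment of $\alpha>1$ has a genuine gap above $\alpha=2$, and the theorem claims all of $(1,\infty)$. The map $(A,B)\mapsto\Tr(A^\alpha B^{1-\alpha})$ is jointly convex only for $\alpha\in[1,2]$; for $\alpha>2$ joint convexity is known to fail (this is precisely why the Petz--R\'enyi divergence loses the data-processing inequality beyond $\alpha=2$), so there is no supporting hyperplane from below in that range. Your fallback fails at the same threshold: controlling the pinching error by the operator Jensen inequality requires $x\mapsto x^\alpha$ to be operator convex, which again holds only for $\alpha\in[1,2]$. So, as written, neither of your two proposed arguments covers $\alpha\in(2,\infty)$.

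The pinching idea can be repaired so that it works for every $\alpha>1$: replace the operator Jensen inequality by the Hansen--Pedersen \emph{trace} Jensen inequality, which needs only ordinary convexity. For $f(x)=x^\alpha$, $\alpha\geq1$, and any orthogonal projection $P$ one has $\Tr f(P\rho P)\leq\Tr\bigl(Pf(\rho)P\bigr)$; applying this in each spectral block $P_i$ of $\tau=\sum_i t_iP_i$ and summing against the positive weights $t_i^{1-\alpha}$ gives $\Tr\bigl(\mathcal{P}(\rho)^\alpha\tau^{1-\alpha}\bigr)\leq\Tr(\rho^\alpha\tau^{1-\alpha})$ for the pinching $\mathcal{P}(X)=\sum_iP_iXP_i$, after which the scalar reversed Young inequality on the commuting pair $(\mathcal{P}(\rho),\tau)$ yields the lower bound $1$. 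A second, smaller gap is the equality case: for $\alpha\in(0,1)$ your tangent-plane argument produces no commutation information whatsoever, so the phrase ``once made simultaneously diagonal'' has nothing to attach to; you would need to run the pinching argument in that range as well (trace Jensen reverses for concave $f$) and then analyze equality in \emph{both} the pinching step and the scalar step to force $\rho=\tau$. The paper obtains the equality characterization directly from its Lagrange-multiplier computation, which is precisely the piece your outline leaves unproved.
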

\begin{proof}
For $\alpha<1$, minimizing $\mathcal{D}_\alpha(\rho\|\sigma)$
for a fixed $\sigma$ is equivalent to minimizing
$$\mathcal{T}=\Tr(\rho^\alpha\sigma^{1-\alpha}).$$
For $\alpha>1$, minimizing $\mathcal{D}_\alpha(\rho\|\sigma)$
for a fixed $\sigma$ is equivalent to maximizing $\cal T$.
Next, we optimize $\cal T$.
Suppose that the matrices $\rho,\sigma$
are such that $\cal T$ is optimal. Since the trace is unitarily invariant,
without loss of generality, we can take $\sigma$ in diagonal form.
Then, for $\epsilon>0$ sufficiently small and $S$ arbitrary in $H_n$, we have
$\e^{i\epsilon S}=I_n+i\epsilon S+{\cal O}(\epsilon^2),$ and so
$$\left.{\d\over\d\epsilon}\Tr(\rho^\alpha\e^{i\epsilon S}\sigma^{1-\alpha}\e^{-i\epsilon S})\right|_{\epsilon=0}=
i\Tr S[\sigma^{1-\alpha},\rho^\alpha]= 0,$$
where
$$[\sigma^{1-\alpha},\rho^\alpha]=\sigma^{1-\alpha}\rho^\alpha-\rho^\alpha\sigma^{1-\alpha}.$$
implying that
$$[\rho^\alpha,\sigma^{1-\alpha}]=[\rho,\sigma]=0.$$
As a consequence, the Hermitian matrices $\rho,\sigma$ are simultaneously
unitarily diagonalizable. Since the trace is unitarily invariant, without
loss of generality we may assume  $\rho,\sigma$ in diagonal form,
$\rho={\rm diag}(\rho_1,\ldots,\rho_n)$,
$\sigma={\rm diag}(\sigma_1,\ldots,\sigma_n)$. As  $\log x$ is an increasing
function of the argument $x$, we find
$$\mathcal{T}\leq\sum_{i=1}^n\e^{\alpha\log\rho_i+(1-\alpha)\log\sigma_{i}}
=\sum_{i=1}^n\rho_i^{\alpha}\sigma_{i}^{(1-\alpha)},\quad\alpha\leq1.$$
and
$$\mathcal{T}\geq\sum_{i=1}^n\e^{\alpha\log\rho_i+(1-\alpha)\log\sigma_{i}}
=\sum_{i=1}^n\rho_i^{\alpha}\sigma_{i}^{(1-\alpha)},\quad\alpha\geq1.$$
Thus,
$$\mathcal{D}_\alpha(\rho\|\sigma)\geq{\log\sum_{i=1}^n\rho_i^{\alpha}\sigma_{i}^{(1-\alpha)}\over\alpha-1}
\geq-\log\sum_{j=1}^n\sigma_j.$$
Next, we optimize ${\sum_{i=1}^n\rho_i^{\alpha}\sigma_{i}^{(1-\alpha)}}$ under the constraint
$\sum_{i=1}^n\rho_i=1$, using Lagrange multipliers techniques. We consider the function
$$\psi={\sum_{i=1}^n\rho_i^{\alpha}\sigma_{i}^{(1-\alpha)}}-\lambda\left(\sum_{i=1}^n\rho_i-1\right),~~\lambda\in \R.$$
The extremum condition leads to
$${\partial\psi\over\partial\rho_i}=\alpha\rho_i^{\alpha-1}\sigma_i^{1-\alpha}-\lambda=0,$$
so that
$$\rho_i=\left({\lambda\over\alpha}\right)^{1/(\alpha-1)}\sigma_i.$$
The Lagrange multiplier $\lambda$ is determined observing that $\sum_{i=1}^n\rho_i=1.$
Thus, $(\lambda/\alpha)^{1/(\alpha-1)}=1/ \sum_{i=1}^n\sigma_i $ and so
$\rho_i=\sigma_i/\sum_{j=1}^n\sigma_j$. The asserted result is finally obtained.
\end{proof}

\section{The R\'enyi-Peierls-Bogoliubov inequality}\label{S3}

In statistical mechanics, the absolute temperature is usually denoted by $T$,
and its inverse, $1/T$, by $\beta$. The {\it internal energy} is defined as the expectation value of the
Hamiltonian $H$ in  the state $\rho$, i.e., $\Tr\rho H$. Here, we are assuming that $\beta=1/T=1$.

The $\alpha$-{\it expectation value} of the Hermitian operator $H$ is defined and denoted as
$$\langle H\rangle_\alpha:={1\over\alpha-1}\log{\Tr\rho^\alpha\e^{(\alpha-1)H}\over\Tr\rho^{\alpha}},~~ \alpha\in(0,1)\cup(1,\infty).$$
where $\rho\in H_{n,+,1}.$

We define, for $\beta$=1, the $\alpha$-{\it R\'enyi internal energy}
($\alpha$-RIE) as the $\alpha$-expectation value of $H$ in the state $\rho$,
\begin{equation}\label{Ealpha}E_\alpha(\rho,H):=\langle H\rangle_\alpha=
{1\over\alpha-1}\log{\Tr\rho^\alpha\e^{(\alpha-1)H}\over\Tr\rho^\alpha},~~\alpha\in(0,1)\cup(1,\infty).\end{equation}

We remark that some authors define differently the $\alpha$-RIE, according to
$$
{\Tr\rho^\alpha{H}\over\Tr\rho^\alpha},$$
that is, as the average of $H$ in the state $\rho^\alpha$. In the definition we
are proposing, the logarithm of the average
of $\e^{(\alpha-1)H}$ in state $\rho^\alpha$, multiplied by $1/(\alpha-1),$ is considered.
This option considerably simplifies the formalism involved in the thermodynamical considerations.
On the other hand, it may be easily shown that, for $\alpha\rightarrow1$, $E_\alpha(\rho,H)$
approaches the standard expectation value of the Hamiltonian
and of the internal energy arising in statistical thermodynamics,
$$E_1(\rho,H)=\lim_{\alpha\rightarrow1}E_\alpha(\rho,H)=\Tr\rho H.$$

We define (for $\beta=1$) the $\alpha$-{\it R\'enyi free energy} ($\alpha$-RFE) as
$$F_\alpha(\rho,H):=E_\alpha(\rho,H)-S_\alpha(\rho)={\log\Tr\rho^\alpha\e^{(\alpha-1)H}\over\alpha-1}
,~~\alpha\in(0,1)\cup(1,\infty).$$
Notice that $F_\alpha(\rho,H)$ is closely related to the $\alpha$-RRE, as
$$F_\alpha(\rho,H)=\mathcal{D}_\alpha(\rho\|\e^{-H}).$$

According to the principles of thermodynamics, the state of equilibrium of a system
is the one for which the free energy is minimized, at constant temperature.
The {\it Helmholtz state}, which is the equilibrium state, is obtained by minimizing
the {Helmholtz free energy} (for fixed temperature).

The next Theorem characterizes, from the knowledge of $H$,
the state which minimizes the $\alpha$-RFE, the so called
the {\it equilibrium state} of the system.
This result is
also known as the Helmholtz free energy variational principle.
\begin{teor}\label{T3.1}(R\'enyi-Peierls-Bogoliubov inequality)
Let $H\in H_n$ be given and  $\rho\in H_{n,+,1}$ be arbitrary. Then,
$$F_\alpha(\rho,H)\geq-\log\Tr\e^{-H},~~\alpha\in(0,1)\cup(1,\infty).$$
Equality occurs if and only if $\rho=\e^{-H}/\Tr\e^{-H}.$
\end{teor}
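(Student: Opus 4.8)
The plan is to obtain this statement as an immediate corollary of Theorem \ref{T2.1}, the lower bound for the $\alpha$-RRE, by exploiting the identity $F_\alpha(\rho,H)=\mathcal{D}_\alpha(\rho\|\e^{-H})$ recorded just above the statement. In effect all the substantive work has already been done in proving Theorem \ref{T2.1}; what remains is to select the correct auxiliary matrix $\sigma$ and confirm that the hypotheses match.

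First I would set $\sigma=\e^{-H}$. Since $H\in H_n$ is Hermitian, $\e^{-H}$ is positive definite, so $\sigma\in H_{n,+}$, which is precisely the class over which Theorem \ref{T2.1} is valid. Next I would verify the identity explicitly: because $\log\e^{-H}=-H$, one has $\sigma^{1-\alpha}=\e^{(1-\alpha)\log\e^{-H}}=\e^{(\alpha-1)H}$, whence the definitions give
$$\mathcal{D}_\alpha(\rho\|\e^{-H})=\frac{\log\Tr(\rho^\alpha\e^{(\alpha-1)H})}{\alpha-1}=F_\alpha(\rho,H),$$
valid for every $\alpha\in(0,1)\cup(1,\infty)$.

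Applying Theorem \ref{T2.1} with this choice of $\sigma$ then yields at once
$$F_\alpha(\rho,H)=\mathcal{D}_\alpha(\rho\|\e^{-H})\geq-\log\Tr\e^{-H},$$
and the equality clause transfers verbatim: equality holds if and only if $\rho=\sigma/\Tr\sigma=\e^{-H}/\Tr\e^{-H}$, which is exactly the Gibbs (Helmholtz equilibrium) state.

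Because the reduction is so direct, there is no genuine obstacle specific to this theorem. The only points demanding care are purely formal: checking that $\e^{-H}$ indeed lands in $H_{n,+}$, and handling the sign in the exponent correctly so that $\sigma^{1-\alpha}=\e^{(\alpha-1)H}$ and the two expressions $\mathcal{D}_\alpha$ and $F_\alpha$ genuinely coincide. The analytic heart of the matter — the Lagrange-multiplier optimization establishing that the Gibbs state is the \emph{unique} optimizer and hence the source of the equality condition — is entirely inherited from the proof of Theorem \ref{T2.1}.
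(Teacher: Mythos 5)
Your proposal is correct and follows exactly the route of the paper: the paper's own proof consists of the single line ``Replacing in Theorem \ref{T2.1} $\sigma$ by $\e^{-H}$, the result follows,'' which is precisely your reduction. Your additional verifications --- that $\e^{-H}\in H_{n,+}$, that $(\e^{-H})^{1-\alpha}=\e^{(\alpha-1)H}$ so that $\mathcal{D}_\alpha(\rho\|\e^{-H})=F_\alpha(\rho,H)$, and that the equality case transfers --- simply make explicit the formal checks the paper leaves to the reader.
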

\begin{proof}
Replacing in Theorem \ref{T2.1} $\sigma$ by $\e^{-H}$, the result follows.
\end{proof}

If the state of equilibrium $\rho$ is known, then the Hamiltonian
of the system is obtained as
$H=-\log\rho-\log\Tr\e^{-H}I_n$, where $I_n\in M_n$ is the identity matrix.

Consider $H$ as a perturbation of the Hamiltonian $H_0$. So, $H_0$ may be regarded
as a convenient approximation of $H.$ The following result provides
useful information on $\Tr\e^{-H}$ from $\Tr\e^{-H_0}$.
\begin{corol}
For $H,H_0\in H_n,$ we have
$${1\over\alpha-1}\log{\Tr\e^{-\alpha H_0}\e^{(\alpha-1)H}\over\Tr\e^{-H_0}}\geq-\log{\Tr\e^{-H}\over\Tr\e^{-H_0}}.$$
\end{corol}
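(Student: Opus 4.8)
The plan is to derive this inequality directly from the R\'enyi--Peierls--Bogoliubov inequality (Theorem \ref{T3.1}) by making a judicious choice of the arbitrary density matrix $\rho$. Since Theorem \ref{T3.1} holds for \emph{every} $\rho\in H_{n,+,1}$, the natural move is to evaluate it at the equilibrium state associated with the approximating Hamiltonian $H_0$, namely
$$\rho=\frac{\e^{-H_0}}{\Tr\e^{-H_0}}.$$
This is precisely the state that would saturate the bound if $H$ were replaced by $H_0$, so it is the canonical trial state to use when $H$ is regarded as a perturbation of $H_0$.

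The first computational step is to evaluate $F_\alpha(\rho,H)$ at this choice. Since $\rho^\alpha=\e^{-\alpha H_0}/(\Tr\e^{-H_0})^\alpha$, substituting into the definition of the $\alpha$-RFE gives
$$F_\alpha(\rho,H)=\frac{1}{\alpha-1}\log\frac{\Tr\e^{-\alpha H_0}\e^{(\alpha-1)H}}{(\Tr\e^{-H_0})^\alpha}.$$
The second step is purely algebraic: I would split off the extra power in the denominator by writing $(\Tr\e^{-H_0})^\alpha=(\Tr\e^{-H_0})\cdot(\Tr\e^{-H_0})^{\alpha-1}$, so that the factor $(\Tr\e^{-H_0})^{\alpha-1}$ contributes an additive term after taking the logarithm and dividing by $\alpha-1$. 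This yields
$$F_\alpha(\rho,H)=\frac{1}{\alpha-1}\log\frac{\Tr\e^{-\alpha H_0}\e^{(\alpha-1)H}}{\Tr\e^{-H_0}}-\log\Tr\e^{-H_0}.$$
In other words, the left-hand side of the asserted inequality is exactly $F_\alpha(\rho,H)+\log\Tr\e^{-H_0}$.

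Finally, I would invoke Theorem \ref{T3.1}, which guarantees $F_\alpha(\rho,H)\geq-\log\Tr\e^{-H}$ for this (indeed for any) $\rho$. Adding $\log\Tr\e^{-H_0}$ to both sides gives
$$\frac{1}{\alpha-1}\log\frac{\Tr\e^{-\alpha H_0}\e^{(\alpha-1)H}}{\Tr\e^{-H_0}}\geq-\log\Tr\e^{-H}+\log\Tr\e^{-H_0}=-\log\frac{\Tr\e^{-H}}{\Tr\e^{-H_0}},$$
which is precisely the claimed bound. I expect no genuine obstacle here: the whole content is recognizing that the corollary is the R\'enyi--Peierls--Bogoliubov inequality specialized to the trial state $\rho=\e^{-H_0}/\Tr\e^{-H_0}$, with both the case $\alpha<1$ and the case $\alpha>1$ already handled uniformly by Theorem \ref{T3.1}. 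The only point requiring a moment's care is the distinction between the exponent $\alpha$ and the exponent $1$ in the normalization $\Tr\e^{-H_0}$, since it is exactly the mismatch between $(\Tr\e^{-H_0})^\alpha$ and $\Tr\e^{-H_0}$ that produces the additive $\log\Tr\e^{-H_0}$ term and hence the ratio appearing on the right-hand side.
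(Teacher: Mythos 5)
Your proposal is correct and is exactly the paper's argument: the paper's proof also consists of substituting $\rho=\e^{-H_0}/\Tr\e^{-H_0}$ into Theorem \ref{T3.1} and performing the same elementary computation, which the paper leaves implicit and you have written out in full. The algebra (splitting $(\Tr\e^{-H_0})^\alpha$ to extract the additive $\log\Tr\e^{-H_0}$ term) is carried out correctly, so there is nothing to fix.
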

\begin{proof}Considering, in Theorem \ref{T3.1}, $\rho=\e^{-H_0}/\Tr\e^{-H_0},$
the result follows by a trivial computation.
\end{proof}
\section{Partition function  and the R\'enyi entropy}\label{S4}
The {\it partition function}
\begin{equation}\label{Z}Z_\beta=\Tr\e^{-\beta H},\end{equation}
where $\beta=1/T$ denotes the inverse of the {\it absolute temperature} and
$H$ is the Hamiltonian of the physical system, plays a fundamental role
in standard statistical thermodynamics.
The discussion of some issues requires the consideration of
the parameter $\beta$, so we will relax the restriction $\beta=1$, which has been adopted up to now.
In standard statistical thermodynamics, the equilibrium properties of the system
are encapsulated into the logarithm of the partition function.
In particular,  the {\it internal energy} $$E_\beta={\Tr H\e^{-\beta H}\over \Tr\e^{-\beta H}}$$
is related to the derivative of
$\log Z_\beta$ with respect to $\beta$ as
$$E_\beta=-{\d \log Z_\beta\over\d\beta}.$$
So, the following question naturally arises. What is the relation between the internal
energy and the partition function in the context of R\'enyi thermodynamics?
Notice that in R\'enyi thermodynamics the partition function is as meaningful
as in standard statistical mechanics, because the expression of the equilibrium state
in the R\'enyi thermodynamics coincides with the corresponding expression in the von Neumann setting,
$\rho=\rho_0:={\e^{-H}/\Tr\e^{-H}}$.

Next we derive a relation between the internal energy and $\log Z_\beta,$ in R\'enyi's thermodynamics.
For this purpose we define the $\alpha$-{\it derivative} of the function $\psi:\R\rightarrow\R$ as the quotient
$${\psi(\beta\alpha)-\psi(\beta)\over \beta(\alpha-1)}.$$
Replacing $\rho$ by $\e^{-H}/\Tr\e^{-H}$  in (\ref{Ealpha}), we conclude, from Corollary (\ref{T3.1}),
that the R\'enyi {\it equilibrium internal energy} (for $\beta=1$) reduces to
\begin{equation}\label{*****}E_\alpha(\rho_0,H)
={1\over\alpha-1}(\log\Tr\e^{-H}-\log\Tr\e^{-\alpha H}).\end{equation}
\begin{pro}\label{P4.1}
For $\beta=1$, the R\'enyi equilibrium internal energy 
is the $\alpha$ derivative of $-\log Z_\beta$, taken at $\beta=1$.
\end{pro}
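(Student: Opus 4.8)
The plan is purely computational: I would substitute the definition of the partition function into the $\alpha$-derivative and check that, when evaluated at $\beta=1$, the result is literally the right-hand side of (\ref{*****}). Concretely, take $\psi(\beta)=-\log Z_\beta=-\log\Tr\e^{-\beta H}$, which is the function whose $\alpha$-derivative is claimed to reproduce the R\'enyi equilibrium internal energy. Everything earlier that I need is already in place: the closed form (\ref{*****}) for $E_\alpha(\rho_0,H)$ was obtained by inserting $\rho_0=\e^{-H}/\Tr\e^{-H}$ into (\ref{Ealpha}), so the task reduces to matching two explicit expressions.

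First I would write out the $\alpha$-derivative of this $\psi$ according to the given definition, namely
$$\frac{\psi(\beta\alpha)-\psi(\beta)}{\beta(\alpha-1)}=\frac{-\log\Tr\e^{-\beta\alpha H}+\log\Tr\e^{-\beta H}}{\beta(\alpha-1)}.$$
The one point deserving care is the multiplicative increment $\beta\mapsto\beta\alpha$ in the argument of $\psi$: it is precisely this feature that produces the exponent $-\beta\alpha H$, so that putting $\beta=1$ yields the matrix $\e^{-\alpha H}$ appearing in (\ref{*****}). Evaluating at $\beta=1$ gives
$$\frac{\log\Tr\e^{-H}-\log\Tr\e^{-\alpha H}}{\alpha-1},$$
which coincides termwise with $E_\alpha(\rho_0,H)$ as recorded in (\ref{*****}). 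This finishes the argument; there is no genuine obstacle beyond the bookkeeping of signs and the placement of $\alpha$ inside the exponent, so I would not expect any step to be delicate.

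I would close with the conceptual remark that makes the statement natural and worth stating. The $\alpha$-derivative is a multiplicative finite-difference analogue of the ordinary derivative, since a short expansion shows it reduces to $\psi'(\beta)$ as $\alpha\to1$. Hence the proposition is the exact R\'enyi counterpart of the classical thermodynamic relation $E_\beta=-\d\log Z_\beta/\d\beta$, with the ordinary derivative replaced by the $\alpha$-derivative, and this viewpoint also explains why the limit $\alpha\to1$ recovers the standard internal energy $E_1(\rho_0,H)=\Tr\rho_0 H$.
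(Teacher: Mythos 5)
Your proposal is correct and follows essentially the same route as the paper: both simply insert $\psi(\beta)=-\log Z_\beta=-\log\Tr\e^{-\beta H}$ into the definition of the $\alpha$-derivative, evaluate the resulting quotient at $\beta=1$, and observe that it matches the closed form (\ref{*****}) for $E_\alpha(\rho_0,H)$. The concluding remark about recovering $E_\beta=-\d\log Z_\beta/\d\beta$ as $\alpha\to1$ is a nice touch but plays no logical role, exactly as in the paper.
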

\begin{proof}Having in mind (\ref{*****}) and that the logarithm of the
partition function, for arbitrary $\beta$, reads
$$\log Z_\beta=\log\Tr\e^{-\beta H},$$
we get
$$E_\alpha(\rho_0,H)
=-{\log Z_\alpha-\log Z_1\over\alpha-1}
=\left.-{\log Z_{\beta\alpha}-\log Z_\beta\over \beta(\alpha-1)}\right|_{\beta=1},$$
and the result follows.
\end{proof}

Since $-\log Z_1=E_\alpha(\rho_0,H)-S_\alpha(\rho_0),$ the relation between
the partition function and the internal energy also determines the entropy $S_\alpha(\rho_0)$.

The discussion in this Section is analogous to the arguments in \cite{baez}.
\section{R\'enyi maximum entropy principle}\label{S5}
In order to formulate the {\it maximum entropy principle} (MaxEnt) in the context of
R\'enyi thermodynamics we introduce the concept of R\'enyi
{\it internal energy} for $\beta\neq1$, as a generalization of (\ref{Ealpha})
\begin{equation}\label{Ealphabeta}E_{\alpha,\beta}(\rho,H):={1\over\beta}\langle\beta H\rangle_\alpha={\log\Tr\rho^\alpha\e^{(\alpha-1)\beta H}-\log\Tr\rho^\alpha\over\beta(\alpha-1)}
.\end{equation}
The parameter $\beta$ controls, or tunes, the internal energy.

\begin{pro}\label{P5.1}
For arbitrary $\beta$, the R\'enyi equilibrium internal energy
is the $\alpha$ derivative of $-\log Z_\beta$.
\end{pro}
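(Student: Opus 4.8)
The plan is to mimic the proof of Proposition \ref{P4.1}, now carrying the inverse temperature $\beta$ through every step. First I would pin down the equilibrium state at inverse temperature $\beta$. By Theorem \ref{T3.1} applied with $\beta H$ in place of $H$, the state minimizing $F_\alpha(\rho,\beta H)$ is $\rho_0=\e^{-\beta H}/Z_\beta$, where $Z_\beta=\Tr\e^{-\beta H}$. This is the natural generalization of the $\beta=1$ equilibrium state $\e^{-H}/\Tr\e^{-H}$, and it is the state whose internal energy the proposition refers to.

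Next I would substitute $\rho_0=\e^{-\beta H}/Z_\beta$ into the definition (\ref{Ealphabeta}) of $E_{\alpha,\beta}$. Since $H$ commutes with itself, all exponentials of $H$ combine additively. One finds $\rho_0^\alpha=\e^{-\alpha\beta H}/Z_\beta^\alpha$, so that $\Tr\rho_0^\alpha=Z_{\alpha\beta}/Z_\beta^\alpha$ with $Z_{\alpha\beta}=\Tr\e^{-\alpha\beta H}$, while $\rho_0^\alpha\e^{(\alpha-1)\beta H}=\e^{-\beta H}/Z_\beta^\alpha$ because $-\alpha\beta+(\alpha-1)\beta=-\beta$, whence $\Tr\rho_0^\alpha\e^{(\alpha-1)\beta H}=Z_\beta/Z_\beta^\alpha$.

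Then the numerator of (\ref{Ealphabeta}) collapses: the two occurrences of $\alpha\log Z_\beta$ cancel, leaving $\log Z_\beta-\log Z_{\alpha\beta}$. Hence
$$E_{\alpha,\beta}(\rho_0,H)={\log Z_\beta-\log Z_{\alpha\beta}\over\beta(\alpha-1)}=-{\log Z_{\beta\alpha}-\log Z_\beta\over\beta(\alpha-1)}.$$
Recognizing the right-hand side as the $\alpha$-derivative of $-\log Z_\beta$, with $\psi=-\log Z$ in the defining quotient from Section \ref{S4}, completes the argument.

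I expect no genuine obstacle here; once the equilibrium state is identified the computation is entirely routine and structurally identical to Proposition \ref{P4.1}, merely tracking the extra factor $\beta$ and replacing $\alpha H$ by $\alpha\beta H$. The only point deserving an explicit word is that the equilibrium state for general $\beta$ is $\e^{-\beta H}/Z_\beta$ rather than $\e^{-H}/\Tr\e^{-H}$, which follows immediately from Theorem \ref{T3.1} with $\beta H$ substituted for $H$.
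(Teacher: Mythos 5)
Your proof is correct and takes essentially the same route as the paper: substitute the $\beta$-dependent equilibrium state $\rho_0=\e^{-\beta H}/Z_\beta$ into the definition (\ref{Ealphabeta}) and recognize the resulting quotient as the $\alpha$-derivative of $-\log Z_\beta$. Your displayed formula $E_{\alpha,\beta}(\rho_0,H)=\bigl(\log Z_\beta-\log Z_{\alpha\beta}\bigr)/\bigl(\beta(\alpha-1)\bigr)$ in fact carries the correct sign, whereas the paper's final expression $\bigl(\log Z_{\alpha\beta}-\log Z_\beta\bigr)/\bigl(\beta(\alpha-1)\bigr)$ has a sign slip (compare the proof of Proposition \ref{P4.1}, where the minus sign appears explicitly).
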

\begin{proof}
For $\beta\neq1$, the R\'enyi equilibrium  internal energy  reduces to
$$E_{\alpha,\beta}(\rho_0,H)
=
{\log\Tr\e^{-\beta H}-\log\Tr\e^{-\alpha\beta H}\over\beta(\alpha-1)}={\log Z_{\alpha\beta}-\log Z_\beta\over\beta(\alpha-1)},$$
and the result follows.\end{proof}
\begin{pro}The R\'enyi equilibrium internal energy is a monotonously decreasing function of $\beta$.
\end{pro}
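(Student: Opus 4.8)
The plan is to reduce the monotonicity to the elementary convexity of the log-partition function. Write $\phi(\beta)=\log Z_\beta=\log\Tr\e^{-\beta H}$; diagonalizing the fixed Hermitian matrix $H$ with eigenvalues $\lambda_1,\ldots,\lambda_n$ gives $\phi(\beta)=\log\sum_i\e^{-\beta\lambda_i}$, the familiar log-sum-exp, so $\phi$ is smooth and convex, with
$$\phi'(\beta)=-\sum_{i}\lambda_i\,p_i(\beta),\qquad \phi''(\beta)=\sum_i\lambda_i^2 p_i(\beta)-\Bigl(\sum_i\lambda_i p_i(\beta)\Bigr)^2\geq0,$$
where $p_i(\beta)=\e^{-\beta\lambda_i}/\sum_j\e^{-\beta\lambda_j}$ are the Gibbs weights. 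In particular the ordinary internal energy $E_s=\Tr H\e^{-sH}/\Tr\e^{-sH}=-\phi'(s)$ obeys $\d E_s/\d s=-\phi''(s)\leq0$, i.e. $E_s$ is a decreasing function of the inverse temperature $s$.

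By Proposition \ref{P5.1} the equilibrium value is $E_{\alpha,\beta}(\rho_0,H)=(\phi(\beta)-\phi(\alpha\beta))/(\beta(\alpha-1))$ (the displayed numerator there should read $\log Z_\beta-\log Z_{\alpha\beta}$ to match the definition of $E_{\alpha,\beta}$). The key step is to recognize this as an \emph{average} of ordinary internal energies over a rescaled temperature interval. Writing $\phi(\alpha\beta)-\phi(\beta)=\int_\beta^{\alpha\beta}\phi'(s)\,\d s$ and substituting $s=\beta t$ yields
$$E_{\alpha,\beta}(\rho_0,H)=\frac{1}{\alpha-1}\int_1^\alpha\bigl(-\phi'(\beta t)\bigr)\,\d t=\frac{1}{\alpha-1}\int_1^\alpha E_{\beta t}\,\d t.$$
Since $\tfrac{1}{\alpha-1}\int_1^\alpha(\cdot)\,\d t$ is a genuine unit-mass positive average over $t$ in the interval with endpoints $1$ and $\alpha$ (namely $\tfrac{1}{\alpha-1}\int_1^\alpha$ for $\alpha>1$ and $\tfrac{1}{1-\alpha}\int_\alpha^1$ for $\alpha<1$), this exhibits $E_{\alpha,\beta}(\rho_0,H)$ as the mean of the values $E_{\beta t}$.

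Monotonicity then follows at once: for each fixed $t>0$ the map $\beta\mapsto E_{\beta t}$ is decreasing, being the composition of the decreasing $E_s$ with the increasing substitution $s=\beta t$, and an average of decreasing functions is decreasing. To make this quantitative I would differentiate under the integral sign,
$$\frac{\d}{\d\beta}E_{\alpha,\beta}(\rho_0,H)=\frac{1}{\alpha-1}\int_1^\alpha t\,\frac{\d E_s}{\d s}\Big|_{s=\beta t}\,\d t=\frac{1}{\alpha-1}\int_1^\alpha\bigl(-t\,\phi''(\beta t)\bigr)\,\d t,$$
and verify the sign in the two regimes. The integrand $-t\,\phi''(\beta t)$ is $\leq0$ for $t>0$; when $\alpha>1$ the integral is $\leq0$ while $\alpha-1>0$, and when $0<\alpha<1$ the integral is $\geq0$ while $\alpha-1<0$, so in both cases the quotient is $\leq0$.

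The only genuine obstacle here is the sign bookkeeping across the two ranges $\alpha\gtrless1$ (together with confirming the sign in the equilibrium formula); the substance of the argument is nothing more than the convexity of $\log\Tr\e^{-\beta H}$, i.e. the non-negativity of the energy variance $\phi''\geq0$.
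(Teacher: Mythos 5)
Your proof is correct and, at bottom, it is the paper's own argument: both reduce the claim to the convexity of $\log Z_\beta$ (non-negativity of the energy variance) combined with Proposition \ref{P5.1}, which exhibits $E_{\alpha,\beta}(\rho_0,H)$ as the slope of the secant of $-\log Z_\beta$ over the interval with endpoints $\beta$ and $\alpha\beta$. Your identity $E_{\alpha,\beta}(\rho_0,H)=\frac{1}{\alpha-1}\int_1^\alpha E_{\beta t}\,{\rm d}t$ is exactly the statement that a secant slope is the average of the derivative over the interval, so your averaging and differentiation-under-the-integral steps amount to an explicit, quantitative proof of the secant-slope monotonicity that the paper invokes as a known property of concave functions.

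Where your write-up genuinely improves on the paper is the sign bookkeeping, and it matters more than you may realize. Besides the reversed numerator in the display of Proposition \ref{P5.1} (it should indeed read $\log Z_\beta-\log Z_{\alpha\beta}$, as you note), the paper's own proof of the present proposition declares $\log Z_\beta$ concave and $-\log Z_\beta$ convex, which contradicts the second-derivative computation it displays: ${\rm d}^2\log Z_\beta/{\rm d}\beta^2\geq0$ shows $\log Z_\beta$ is convex, hence $-\log Z_\beta$ concave, and it is this concavity that makes secant slopes decrease as both endpoints move to the right. Taken literally, the paper's labels would yield the opposite monotonicity; your version states the convexity of $\phi=\log Z_\beta$ correctly and verifies both regimes $\alpha>1$ and $0<\alpha<1$, so it is watertight. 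The one residual caveat, common to both proofs, is that the decrease is strict only for non-scalar $H$: if $H=cI_n$ then $\phi''\equiv0$ and $E_{\alpha,\beta}(\rho_0,H)\equiv c$, so ``monotonously decreasing'' must be read in the weak sense there, a point the paper glosses over with its remark that equality occurs only as $\beta\rightarrow\infty$.
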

\begin{proof}
We have that $-\log Z_\beta$ is a convex function of $\beta$
 as $\log Z_\beta$ is concave, because
$${\d^2 \log Z_\beta\over\d \beta^2}={\Tr H^2\e^{-\beta H}\over\Tr\e^{-\beta H}}
-\left({\Tr H\e^{-\beta H}\over\Tr\e^{-\beta H}}\right)^2\geq0.$$
Now, observing that equality occurs only in the limit $\beta\rightarrow\infty$, we conclude that
$E_{\alpha,\beta}(\rho_0,H)$, being, according to Proposition \ref{P5.1}, the slope of the secant line through the points
$(\beta,-\log Z_\beta)$ and $(\alpha\beta,-\log Z_{\alpha\beta}),$
decreases as $\beta$ increases.\end{proof}

We remark that $E_{\alpha,\beta}(\rho_0,H)$ lies in the interval defined by
the lowest and the highest eigenvalue of $H$. This follows, observing that for $\beta=\pm\infty$
these eigenvalues are reached,
$$\lambda_{min}(H)\leq E_{\alpha,\beta}(\rho_0,H)\leq\lambda_{max}(H).$$
For $\rho\neq\rho_0,$ $E_{\alpha,\beta}(\rho,H)$ may not be in that interval.
\begin{pro}For $\rho_0$ the $\beta$-dependent equilibrium state, and for $\lambda_{min}(H)$, $\lambda_{max}(H)$ the
lowest and the highest eigenvalue of $H$, respectively, we have
$$\lim_{\beta\rightarrow\infty}E_{\alpha,\beta}(\rho_0,H)=\lambda_{min}(H),\quad\lim_{\beta\rightarrow-\infty}E_{\alpha,\beta}(\rho_0,H)=\lambda_{max}(H).$$
\end{pro}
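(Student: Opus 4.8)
The plan is to reduce the two limits to a single, elementary asymptotic statement about $\log Z_\beta=\log\Tr\e^{-\beta H}$. By Proposition \ref{P5.1}, the R\'enyi equilibrium internal energy is the slope of the secant line joining the points $(\beta,-\log Z_\beta)$ and $(\alpha\beta,-\log Z_{\alpha\beta})$; writing $g(\beta):=-\log Z_\beta$, this reads
$$E_{\alpha,\beta}(\rho_0,H)=\frac{g(\alpha\beta)-g(\beta)}{\beta(\alpha-1)}=\frac{\alpha}{\alpha-1}\cdot\frac{g(\alpha\beta)}{\alpha\beta}-\frac{1}{\alpha-1}\cdot\frac{g(\beta)}{\beta}.$$
In this form both claims follow immediately once I know the limits of $g(\beta)/\beta=-\tfrac{1}{\beta}\log Z_\beta$ as $\beta\to\pm\infty$, since $\alpha\beta\to+\infty$ together with $\beta\to+\infty$ (and likewise for $-\infty$) because $\alpha>0$.

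So the only real content is the following asymptotic lemma, which I would establish first: $-\tfrac{1}{\beta}\log Z_\beta\to\lambda_{min}(H)$ as $\beta\to+\infty$ and $-\tfrac{1}{\beta}\log Z_\beta\to\lambda_{max}(H)$ as $\beta\to-\infty$. To prove it I diagonalize $H$, so that $Z_\beta=\sum_{i=1}^n\e^{-\beta\lambda_i}$ with $\lambda_1,\ldots,\lambda_n$ the eigenvalues of $H$. For $\beta\to+\infty$ I factor out the dominant term,
$$Z_\beta=\e^{-\beta\lambda_{min}}\sum_{i=1}^n\e^{-\beta(\lambda_i-\lambda_{min})},$$
where the remaining sum tends to the multiplicity $m$ of $\lambda_{min}$ (each exponent $\lambda_i-\lambda_{min}\ge0$, so the terms with $\lambda_i>\lambda_{min}$ vanish while those equal to $\lambda_{min}$ contribute $1$) and in particular stays in $[1,n]$. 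Hence
$$-\frac{1}{\beta}\log Z_\beta=\lambda_{min}-\frac{1}{\beta}\log\Big(\sum_{i=1}^n\e^{-\beta(\lambda_i-\lambda_{min})}\Big)\longrightarrow\lambda_{min},$$
because the bracketed sum is bounded and is divided by $\beta\to\infty$. The case $\beta\to-\infty$ is identical after factoring out $\e^{-\beta\lambda_{max}}$: now $\lambda_i-\lambda_{max}\le0$ and $-\beta\to+\infty$, so the analogous sum converges to the multiplicity of $\lambda_{max}$.

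Substituting these limits into the displayed expression for $E_{\alpha,\beta}(\rho_0,H)$ gives, for $\beta\to+\infty$,
$$\frac{\alpha}{\alpha-1}\lambda_{min}-\frac{1}{\alpha-1}\lambda_{min}=\lambda_{min},$$
and, for $\beta\to-\infty$, the same computation with $\lambda_{max}$ yields $\lambda_{max}$, which is exactly the assertion. I do not expect a genuine obstacle here: the entire argument is a careful bookkeeping of the subleading term. The single point that needs attention is that this correction, $\tfrac{1}{\beta}\log(\cdots)$, carries the degeneracy $m$ of the extreme eigenvalue, but since the sum is bounded between $1$ and $n$ uniformly in $\beta$, dividing by $\beta$ annihilates it and the degeneracy never enters the limit. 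As a consistency check, the two values $\lambda_{min}$ and $\lambda_{max}$ are precisely the endpoints of the interval $[\lambda_{min}(H),\lambda_{max}(H)]$ in which $E_{\alpha,\beta}(\rho_0,H)$ was already shown to lie.
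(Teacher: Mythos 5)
Your proof is correct, and it takes a route that is genuinely different in execution from the paper's. The paper's own proof is one line: it combines the convexity of $-\log Z_\beta$ (established in the preceding proposition) with the assertion that $\lambda_{min}(H)$ and $\lambda_{max}(H)$ are the slopes of the asymptotes of $-\log Z_\beta$, so that the secant slope from Proposition \ref{P5.1}, taken between two points that both run off to infinity, must converge to the asymptotic slope. You share the same starting point (the secant-slope identity of Proposition \ref{P5.1}, which you state with the correct sign, $E_{\alpha,\beta}(\rho_0,H)=\bigl(g(\alpha\beta)-g(\beta)\bigr)/\bigl(\beta(\alpha-1)\bigr)$ for $g(\beta)=-\log Z_\beta$), but then you dispense with convexity entirely: the algebraic rewriting
$$E_{\alpha,\beta}(\rho_0,H)=\frac{\alpha}{\alpha-1}\cdot\frac{g(\alpha\beta)}{\alpha\beta}-\frac{1}{\alpha-1}\cdot\frac{g(\beta)}{\beta}$$
reduces everything to the single limit $-\frac{1}{\beta}\log Z_\beta\to\lambda_{min}(H)$ (resp. $\lambda_{max}(H)$), which you then prove from scratch by factoring out the dominant exponential and bounding the remaining sum in $[1,n]$. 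What your approach buys is self-containedness and rigor at the point the paper leaves implicit: the existence and slopes of the asymptotes of $-\log Z_\beta$ are exactly the Laplace-type asymptotics you verify, including the observation that the degeneracy of the extreme eigenvalue only contributes a bounded correction killed by the division by $\beta$. What the paper's argument buys is brevity, since convexity and the secant interpretation were already on the table; note, though, that once one knows a genuine asymptote exists, convexity is not even needed for the secant slopes to converge, so your elementary route loses nothing in generality.
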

\begin{proof}The result follows keeping in mind the convexity of $-\log Z_\beta$ and that
$\lambda_{min}(H)$,
$\lambda_{max}(H)$
are the slopes of the asymptotes to $-\log Z_\beta$.
\end{proof}

For $\beta\neq1$, we define the $\alpha$-{\it R\'enyi free energy} $F_{\alpha,\beta}(\rho,H)$ as
$$F_{\alpha,\beta}(\rho,H):=E_{\alpha,\beta}(\rho,H)-{1\over\beta}S_\alpha(\rho)
={\log\Tr\rho^\alpha\e^{(\alpha-1)\beta H}\over\beta(\alpha-1)}.$$
The maximum entropy principle states that the equilibrium state $\rho$ is
obtained by maximizing $-\beta F_{\alpha,\beta}(\rho,H)$ with respect to $\rho$,
under the constraint $\sum_{i=1}^n \rho_i=1$, which, for $\beta\geq0,$
is equivalent to minimizing $F_{\alpha,\beta}(\rho,H)$ under the same constraint.
Replacing in Theorem \ref{T3.1} $\sigma$ by $\e^{-\beta H}$, we obtain
\begin{teor} (R\'enyi-Peierls-Bogoliubov-inequality)
Let $H\in H_n$ be given and $\rho\in H_{n,+,1}$ be arbitrary. Then,
$$\beta F_{\alpha,\beta}(\rho,H)\geq-\log\Tr\e^{-\beta H},~~\alpha\in(0,1)\cup(1,\infty),~\beta\in\R.$$
Equality occurs if and only if $\rho=\e^{-\beta H}/\Tr\e^{-\beta H}.$
\end{teor}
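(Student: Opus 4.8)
The plan is to reduce the statement directly to Theorem \ref{T2.1} by a single substitution, exactly as in the $\beta=1$ case. The starting point is the explicit form of the free energy,
$$\beta F_{\alpha,\beta}(\rho,H)={\log\Tr\rho^\alpha\e^{(\alpha-1)\beta H}\over\alpha-1}.$$
First I would recognize the right-hand side as an $\alpha$-R\'enyi relative entropy. Setting $\sigma=\e^{-\beta H}$, one has $\sigma^{1-\alpha}=\e^{-(1-\alpha)\beta H}=\e^{(\alpha-1)\beta H}$, so that
$$\mathcal{D}_\alpha(\rho\|\e^{-\beta H})={\log\Tr(\rho^\alpha\e^{(\alpha-1)\beta H})\over\alpha-1}=\beta F_{\alpha,\beta}(\rho,H).$$
Equivalently, this is the identity $\beta F_{\alpha,\beta}(\rho,H)=F_\alpha(\rho,\beta H)$, obtained by replacing $H$ with $\beta H$ in the $\beta=1$ free energy, which connects the present statement to Theorem \ref{T3.1}.

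Next I would verify that the hypotheses of Theorem \ref{T2.1} are satisfied for every admissible $\beta$. Since $H\in H_n$ is Hermitian, $\e^{-\beta H}$ is Hermitian with eigenvalues $\e^{-\beta\lambda_i(H)}>0$, hence positive definite, so $\sigma=\e^{-\beta H}\in H_{n,+}$ for \emph{every} real $\beta$; and $\rho\in H_{n,+,1}$ by assumption. Therefore Theorem \ref{T2.1} applies verbatim, with $\Tr\sigma=\Tr\e^{-\beta H}$, yielding
$$\beta F_{\alpha,\beta}(\rho,H)=\mathcal{D}_\alpha(\rho\|\e^{-\beta H})\geq-\log\Tr\e^{-\beta H},$$
together with the equality clause: equality holds if and only if $\rho=\sigma/\Tr\sigma=\e^{-\beta H}/\Tr\e^{-\beta H}$.

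I expect no genuine obstacle here, since the analytic content is already carried by Theorem \ref{T2.1}. The only points deserving a moment of care are that the substitution $\sigma=\e^{-\beta H}$ is legitimate for all real $\beta$, including negative values (negative temperatures), which follows because matrix exponentials of Hermitian matrices are always positive definite; and that Theorem \ref{T2.1} was already established uniformly across both ranges $\alpha\in(0,1)$ and $\alpha\in(1,\infty)$, with the correct orientation of the inequality handled there. Consequently no separate case analysis in $\alpha$ or in the sign of $\beta$ is required, and the stated inequality, with its equality condition, follows at once.
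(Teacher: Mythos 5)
Your proof is correct and is essentially the paper's own argument: the paper obtains this theorem by the very same substitution $\sigma=\e^{-\beta H}$ into the lower bound $\mathcal{D}_\alpha(\rho\|\sigma)\geq-\log\Tr\sigma$ of Theorem \ref{T2.1} (via the identity $\beta F_{\alpha,\beta}(\rho,H)=\mathcal{D}_\alpha(\rho\|\e^{-\beta H})$), inheriting the equality case $\rho=\sigma/\Tr\sigma$. Your additional check that $\e^{-\beta H}\in H_{n,+}$ for every real $\beta$, so no case analysis in $\alpha$ or in the sign of $\beta$ is needed, is a welcome piece of care that the paper leaves implicit.
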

This result is in agreement with the corresponding expression in von Neumann statistical mechanics.
We observe that, in conventional thermodynamics, $\beta\geq0.$ However, if $n$ is finite, it is also meaningful to consider $\beta<0.$

The equilibrium
state depends only on the value of the parameter $\beta$, which is determined by
the required value of the internal energy.
\section{Uncertainty relations}\label{S6}
The {\it uncertainty principle} was formulated by Heisenberg in 1927 and states that
it is not possible to measure simultaneously, with absolute precision, the {\it position operator} $x$
and the {\it momentum operator} $p$ of a particle. These operators are considered in
the one dimensional context. The product of the uncertainties
in the respective measurements $\Delta x$ and $\Delta p$,
is of the order of Plank's constant $\hbar$. We consider units such that $\hbar=1.$
This indeterminacy relation may be formulated in precise mathematical form as
$$\Delta x\Delta p\geq {\hbar\over2}.$$

The Heisenberg-Robertson uncertainty principle, firstly proposed by Heisenberg
and then generalized by Robertson \cite{robertson} gives a lower bound
for the product of the standard deviation of two observables.
To state it, we introduce some useful concepts.
For $A\in H_n,$ the {\it expectation value} of the measurement of
the observable $A$ in the state $\rho\in H_{n,+,1}$  is
$$\langle A\rangle={\Tr\rho A\over\Tr\rho}.$$
The {\it variance} in the measurement of $A$ is defined as
$$\sigma_A^2={1\over\Tr\rho}{\Tr\rho}(A-\langle A\rangle)^2.$$ 
The {\it uncertainty} in the measurement of $A$ is defined as
the {\it standard deviation} $\sigma_A$. As usually,
we denote the anticommutator of $A,B$ as
$$\{A,B\}=AB+BA .$$
The {\it covariance} of $A,B\in H_n$ is determined as
$${\rm Cov}({A,B})={1\over\Tr\rho}{\Tr\rho}\left({1\over2}\{ A,B\}-\langle A\rangle\langle B\rangle\right).$$
Observe that ${\rm Cov}({A,A})=\sigma_A^2,$ i.e., the variance is a particular case of the covariance, and ${\rm Cov}(A,B)={\rm Cov}(B,A).$
The Heisenberg-Robertson uncertainty relation states that
$$\sigma_A^2\sigma_B^2\geq{1\over4}|\langle[A,B]\rangle|^2,$$
and was improved by Schr\"odinger as
$$\sigma_A^2\sigma_B^2\geq{1\over4}|\langle[A,B]\rangle|^2+{1\over4}\langle\{A,B\}-\langle A\rangle\langle B\rangle\rangle^2.$$

The following theorem gives a lower bound for the product of
the standard deviations of two quantum obsevables:
\begin{teor}\label{T4.1} Let $A$ and $B$ be Hermitian matrices
and $\D\in H_{n,+,1}$. Then,
\begin{equation}\label{****}\sigma_A^2\sigma_B^2\geq
{\rm Cov}({A,B})^2+\left({1\over2}\langle i[A,B]\rangle\right)^2.\end{equation}
Equality occurs if and only if $A$ is a multiple of $B$.
\end{teor}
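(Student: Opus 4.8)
The plan is to prove the Schrödinger–Robertson uncertainty relation by reducing it to a Cauchy–Schwarz inequality on a suitable inner product. First I would reduce to the centered operators by setting $\tilde A = A - \langle A\rangle I_n$ and $\tilde B = B - \langle B\rangle I_n$. These are again Hermitian, and by the definitions in the paper one has $\sigma_A^2 = \langle \tilde A^2\rangle$, $\sigma_B^2 = \langle \tilde B^2\rangle$, and ${\rm Cov}(A,B) = \tfrac12\langle\{\tilde A,\tilde B\}\rangle$, while the commutator is unchanged, $[\tilde A,\tilde B] = [A,B]$. Thus the whole inequality is expressed purely in terms of expectation values of the centered operators in the state $\rho$.

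The key device is the sesquilinear form $\langle X,Y\rangle_\rho := \Tr(\rho\, X^* Y)$ on $M_n$. Since $\rho \in H_{n,+,1}$ is positive definite, this is a genuine inner product, so the Cauchy–Schwarz inequality gives $|\langle \tilde A,\tilde B\rangle_\rho|^2 \le \langle \tilde A,\tilde A\rangle_\rho \langle \tilde B,\tilde B\rangle_\rho = \sigma_A^2 \sigma_B^2$. The next step is to compute the left-hand side. Writing $\Tr(\rho\,\tilde A\tilde B) = \Tr(\rho(\tfrac12\{\tilde A,\tilde B\} + \tfrac12[\tilde A,\tilde B]))$, the symmetric part contributes the real number ${\rm Cov}(A,B)$ and the antisymmetric part contributes $\tfrac12\Tr(\rho[\tilde A,\tilde B])$. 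Because $\tilde A,\tilde B$ are Hermitian, $i[\tilde A,\tilde B]$ is Hermitian and $[\tilde A,\tilde B]$ is anti-Hermitian, so $\Tr(\rho[\tilde A,\tilde B])$ is purely imaginary; hence $\langle\tilde A,\tilde B\rangle_\rho = {\rm Cov}(A,B) + \tfrac{i}{2}\langle i[A,B]\rangle \cdot(-i)$ splits into a real and an imaginary part. Taking the squared modulus then yields exactly
$$
|\langle\tilde A,\tilde B\rangle_\rho|^2 = {\rm Cov}(A,B)^2 + \left(\tfrac12\langle i[A,B]\rangle\right)^2,
$$
and combining with Cauchy–Schwarz gives (\ref{****}).

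The main obstacle is the equality analysis, and I expect the subtlety to lie in matching the stated equality condition. Cauchy–Schwarz equality holds if and only if $\tilde A$ and $\tilde B$ are linearly dependent as vectors in the inner-product space $(M_n, \langle\cdot,\cdot\rangle_\rho)$, i.e. $\tilde B = c\,\tilde A$ for some scalar $c$ (up to the nullspace of the form, which is trivial here since $\rho$ is positive definite). One must then check that this translates into $A$ being an affine-linear combination of $B$, which the statement phrases as ``$A$ is a multiple of $B$''; care is needed because the centering shifts mean the proportionality is really between $\tilde A$ and $\tilde B$, and one should note that for Hermitian $\tilde A,\tilde B$ the proportionality constant $c$ must be real. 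I would close by verifying that under $\tilde B = c\tilde A$ both the covariance term and the commutator term simultaneously saturate, confirming the equivalence stated in the theorem.
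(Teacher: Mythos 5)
Your proof is correct and is essentially the paper's own argument: the paper applies the Hilbert--Schmidt Cauchy--Schwarz inequality to the operators $A'_\rho=\rho^{1/2}(A-\langle A\rangle)/(\Tr\rho)^{1/2}$ and $B'_\rho=\rho^{1/2}(B-\langle B\rangle)/(\Tr\rho)^{1/2}$, which is the same computation as your Cauchy--Schwarz inequality for the weighted inner product $\langle X,Y\rangle_\rho=\Tr(\rho X^*Y)$ applied to the centered operators $\tilde A,\tilde B$, including the identical splitting of the cross term into the real part ${\rm Cov}(A,B)$ and the imaginary part $-\tfrac12\langle i[A,B]\rangle$. Your treatment of the equality case is, if anything, slightly more careful than the paper's, which passes directly from ``$A'_\rho$ is a multiple of $B'_\rho$'' to ``$A$ is a multiple of $B$'' without acknowledging that the linear dependence really involves the centered operators, i.e.\ an affine relation between $A$ and $B$.
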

\begin{proof} We observe that $i[A,B]$ is Hermitian, as $(i[A,B])^*=i[A,B]$.
Let
$${A}'_\rho:={\rho^{1/2}\over(\Tr\rho)^{1/2}}(A-\langle A\rangle),
~~{B}'_\rho:={\rho^{1/2}\over(\Tr\rho)^{1/2}}(B-\langle B\rangle).$$

We easily find
$$\sigma_A^2=\Tr({A'}_\rho{A'}_\rho^*),~~\sigma_B^2=\Tr({B'}_\rho{B'}_\rho^*)$$
and
$$\Tr({A'}_\rho{B'}_\rho^*)={1\over\Tr\rho}\Tr\rho(A-\langle A\rangle)(B-\langle B\rangle)
={1\over\Tr\rho}\Tr\rho\left({1\over2}\{B,A\}+{1\over2}[B,A]-\langle A\rangle\langle B\rangle\right)
.$$
On the other hand,
$$\Tr({B'}_\rho{A'}_\rho^*)={1\over\Tr\rho}\Tr\rho\left({1\over2}\{B,A\}-{1\over2}[B,A]-\langle A\rangle\langle B\rangle\right),$$
so that
$$\Tr({A'}_\rho{B'}_\rho^*)=
{1\over2}\left\langle\{ B,A\}\right\rangle-\langle A\rangle\langle B\rangle+{1\over2i}\langle i[B,A]\rangle,$$
and
$$\Tr({B'}_\rho{A'}_\rho^*)=
{1\over2}\left\langle \{B,A\}\right\rangle-\langle A\rangle\langle B\rangle-{1\over2i}\langle i[B,A]\rangle.
$$
According to the matricial Schwartz inequality, we have
$$
\Tr({A}'_\rho{A'}^*_\rho)\Tr({B}'_\rho{B'}_\rho^*)
\geq\Tr({A'}_\rho{B'}_\rho^*)\Tr({B'}_\rho{A'}_\rho^*).$$
Equality occurs if and only if $A'_\rho$ is a multiple of $B'_\rho$,
that is,  if and only if $A$ is a multiple of $B$.
\end{proof}

We present the relation (\ref{****}) in a form susceptible of extension.
Let us introduce the {\it covariance matrix}
$$\sigma({A,B})=\left[\begin{matrix}\sigma_A^2&{\rm Cov}({A,B})\\
{\rm Cov}({A,B})&\sigma_B^2\end{matrix}\right].$$
The inequality in  (\ref{****}) can be expressed as
$$\det\sigma({A,B})\geq\left({1\over2}\langle i[A,B]\rangle\right)^2.$$

For $m$ observables $\{X_k\}_{k=1}^m$,
let $${X'_j}_\rho:=({{\Tr\rho}})^{-1/2}{\rho^{1/2}}(X_j-\langle X_j\rangle),\quad j=1,\ldots,m.$$
Then
$$\Tr {X'_j}_\rho{X'_k}_\rho^*={1\over\Tr\rho}{\Tr\rho}\left( X_jX_k-\langle X_j\rangle\langle X_k\rangle\right)
={\rm Cov}(X_j,X_k)-{i\over2}\langle i[X_j,X_k]\rangle$$
where
$$\langle i[X_j,X_k]\rangle={1\over\Tr\rho}{\Tr\rho}\left( i[X_j,X_k]\right).$$

Notice that
\begin{equation}\label{Cov}{\rm Cov}(X_j,X_k)={1\over2}(\Tr {X'_j}_\rho{X'_k}_\rho^*+\Tr {X'_k}_\rho{X'_j}_\rho^*)\end{equation}
and
$$-{i\over2}\langle i[X_j,X_k]\rangle={1\over2}(\Tr {X'_j}_\rho{X'_k}_\rho^*-\Tr {X'_k}_\rho{X'_j}_\rho^*).$$
We consider the $m\times m$ covariance matrix
\begin{equation}\label{sigma}\sigma({X_1,\ldots,X_m})=\left[\begin{matrix}{\rm Cov}(X_1,X_1)&\dots&{\rm Cov}(X_1,X_m)\\
\vdots&\ddots&\vdots\\
{\rm Cov}(X_m,X_1)&\dots&{\rm Cov}(X_m,X_m)\end{matrix}\right]\end{equation}
and the matrix
formed by the measurements of the commutators of the observables,
\begin{equation}\label{delta}\delta({X_1,\ldots,X_m})=\left[\begin{matrix}-{i\over2}\langle i[X_1,X_1]\rangle&\dots&-{i\over2}\langle i[X_1,X_m]\rangle\\
\vdots&\ddots&\vdots\\
-{i\over2}\langle i[X_m,X_1]\rangle&\dots&-{i\over2}\langle i[X_m,X_m]\rangle
\end{matrix}\right].\end{equation}

The $m\times m$ matrix
$$\tau=
\left[\begin{matrix}\Tr {X'_1}_\rho{X'_1}_\rho^*&\dots&\Tr{X'_1}{X'_m}_\rho^*\\
\vdots&\ddots&\vdots\\
\Tr{X'_m}_\rho{X'_1}_\rho^*&\dots&\Tr{X'_m}_\rho{X'_m}_\rho^*
\end{matrix}\right].$$
is positive semidefinite, as $z^*\tau z\geq0$ for any $z\in\C^m$. 
In fact, $\tau$ may be seen as the Gram matrix of the operators $X_{kl}$ with respect to
the Hilbert-Schmidt inner product $\langle Y,X\rangle=\Tr X^* Y.$
Obviously, $\tau=\sigma+\delta.$
\begin{teor}\label{T4.2} For $\sigma(X_1,\ldots,X_m),~\delta(X_1,\ldots,X_m)$ in (\ref{sigma}), (\ref{delta}),
such that $\sigma(X_1,\ldots,X_m)+\delta(X_1,\ldots,X_m)$ is positive definite
and $m$ an even number, we have
\begin{equation}\label{m op}\det\sigma(X_1,\ldots,X_m)>\det i\delta(X_1,\ldots,X_m).\end{equation}
\end{teor}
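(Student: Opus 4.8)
The plan is to exploit the rigid structure of the two matrices $\sigma$ and $\delta$ and reduce the determinantal inequality to a statement about the eigenvalues of a single real antisymmetric matrix. First I would record the two facts that drive everything. Since $\sigma(X_1,\ldots,X_m)$ is a covariance matrix it is real symmetric, and since $\tau=\sigma+\delta$ is positive definite its real part $\sigma$ is itself positive definite (for real $x$, one has $x^T\sigma x=x^*\tau x>0$). Next, because $\langle i[X_j,X_k]\rangle$ is real and antisymmetric in the pair $(j,k)$, every entry of $\delta$ is purely imaginary with $\delta_{kj}=-\delta_{jk}$; hence $R:=i\delta$ is a real antisymmetric matrix, $\delta$ is Hermitian, and for even $m$ the quantity $\det(i\delta)=\det R$ is the square of the Pfaffian of $R$, in particular real and nonnegative.

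The heart of the argument is a simultaneous congruence. Writing $\sigma=\sigma^{1/2}\sigma^{1/2}$ with $\sigma^{1/2}$ the positive definite square root, I would factor
$$\tau=\sigma+\delta=\sigma^{1/2}\left(I_n-iK\right)\sigma^{1/2},\qquad K:=\sigma^{-1/2}R\,\sigma^{-1/2}.$$
Congruence by the real matrix $\sigma^{-1/2}$ preserves antisymmetry, so $K$ is again real antisymmetric; for even $m$ its eigenvalues occur in conjugate purely imaginary pairs $\pm i\mu_j$, $j=1,\ldots,m/2$, with $\mu_j\ge0$. Consequently $I_n-iK$ is Hermitian with eigenvalues $1\pm\mu_j$, and the positive definiteness of $\tau$ is equivalent, through the invertible congruence, to $I_n-iK>0$, that is, to $\mu_j<1$ for every $j$. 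Taking determinants in the factorization gives $\det R=\det\sigma\cdot\det K=\det\sigma\prod_j\mu_j^2$.

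It then remains only to compare $\det\sigma$ with $\det(i\delta)=\det R=\det\sigma\prod_j\mu_j^2$. Since each factor $\mu_j^2$ lies in $[0,1)$ and there is at least one factor (as $m\ge2$), the product is strictly less than $1$; because $\det\sigma>0$ this yields $\det(i\delta)<\det\sigma$, which is precisely (\ref{m op}). I expect the main obstacle to be verification rather than ingenuity: one must confirm carefully that $R=i\delta$ is genuinely real and antisymmetric, so that the pairing of its eigenvalues and the Pfaffian structure are available, and, above all, that it is the \emph{strict} positive definiteness of $\tau$, not mere semidefiniteness, that forces the strict bound $\mu_j<1$ and hence the strict inequality. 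Without strictness one would obtain only $\det\sigma\ge\det(i\delta)$, so the hypothesis that $\sigma+\delta$ be positive definite is exactly what the strict conclusion requires.
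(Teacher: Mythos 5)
Your proof is correct and follows essentially the same route as the paper's: its auxiliary lemma performs the identical congruence by the inverse square root of the symmetric part, derives the eigenvalue pairing from antisymmetry, and bounds the eigenvalues by $1$ using the strict positive definiteness of $\tau=\sigma+\delta$, exactly as you do. The only cosmetic difference is that you phrase the spectral analysis via the real antisymmetric matrix $i\delta$ (with eigenvalues $\pm i\mu_j$ and a Pfaffian remark), whereas the paper works with the Hermitian matrix $\delta$ itself, whose congruence-transformed eigenvalues are real and lie in $(-1,1)$.
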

To prove this result we present an auxiliary Lemma.
\begin{lema}\label{lemma}For $C$ a positive definite matrix with even dimension,
with $A=(C+C^T)/2$ and $B=(C-C^T)/2$,  we have
$$\det A>\det iB.$$
\end{lema}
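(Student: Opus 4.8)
We have $C$ positive definite with even dimension $m$. We split $C = A + B$ where $A = (C+C^T)/2$ is the symmetric part and $B = (C-C^T)/2$ is the antisymmetric part. We want $\det A > \det(iB)$.

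Since $C$ is complex (it's a Gram matrix of complex operators), let me think about this carefully.

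Actually, looking at context: $C = \tau = \sigma + \delta$. Here $\sigma$ is real symmetric (covariances are real), and $\delta$ has entries $-\frac{i}{2}\langle i[X_j,X_k]\rangle$.

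Let me reconsider. In the abstract setting of the lemma: $C$ is positive definite, $A=(C+C^T)/2$, $B=(C-C^T)/2$.

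So $A = A^T$ (symmetric), $B = -B^T$ (antisymmetric). And $C = A+B$.

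Note $iB$: since $B$ is antisymmetric, $iB$ is... let's see if there's Hermitian structure.

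**Key observations about $A$ and $B$**

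From $C$ positive definite: $C^* = C$ (Hermitian) and $C > 0$.

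$A = (C+C^T)/2$. Since $C^* = C$, we have $C^T = \bar{C}$. So $A = (C + \bar C)/2 = \operatorname{Re}(C)$, which is real symmetric. Similarly $B = (C - \bar C)/2 = i\operatorname{Im}(C)$, so $B$ is $i$ times a real matrix. Since $B$ antisymmetric and $B = i(\text{real})$, the real part $\operatorname{Im}(C)$ is real antisymmetric.

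So $A$ is real symmetric positive definite (it's the real part of a PD Hermitian matrix, hence PD). And $iB = i \cdot i\operatorname{Im}(C) = -\operatorname{Im}(C)$, which is real antisymmetric.

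**The real inequality to prove**

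So: $A$ real symmetric positive definite, $S := iB = -\operatorname{Im}(C)$ real antisymmetric. Want $\det A > \det S$.

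Since $m$ is even, $\det S \geq 0$ for real antisymmetric $S$ (it equals $\operatorname{Pf}(S)^2$).

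And $C = A + B = A - iS > 0$ means $A + S$ stuff... let me recompute: $C = A + B = A + (-iS)/... $ wait, $B = iB/i = S/i = -iS$. So $C = A - iS$ with $A$ symmetric PD, $S$ real antisymmetric.

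$C = A - iS$ Hermitian PD: check $C^* = A^* + i S^* = A + i(-S)^T$... $S^* = S^T = -S$ (real antisym), so $C^* = A - i(-S) \cdot$... let me just trust: $C = A - iS$, $C^* = A^T + iS^* = A - iS$? $S^* = S^T = -S$, so $iS^* = -iS$, giving $C^* = A - iS = C$. ✓.

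**Plan**

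The approach: Since $C = A - iS > 0$ with $A$ real SPD, write $C = A^{1/2}(I - iA^{-1/2}SA^{-1/2})A^{1/2}$. Let $K = A^{-1/2}SA^{-1/2}$, which is real antisymmetric. Then $C > 0 \iff I - iK > 0 \iff$ eigenvalues of $iK$ are $<1$, i.e. eigenvalues of $K$ (which are $\pm i\mu_j$, $\mu_j$ real) satisfy... $iK$ has eigenvalues $\mp\mu_j$, all $<1$, so $|\mu_j|<1$.

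Then $\det C = \det A \cdot \det(I-iK) = \det A \cdot \prod(1+\mu_j^2)$ — wait, eigenvalues of $iK$ are real $\pm\mu_j$ paired, $\det(I-iK) = \prod(1-\mu_j)(1+\mu_j)=\prod(1-\mu_j^2)$. And $\det S = \det(A^{1/2}KA^{1/2}) = \det A \det K = \det A \prod \mu_j^2$.

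So want $\det A > \det A \prod\mu_j^2$, i.e. $\prod \mu_j^2 < 1$. Since each $|\mu_j|<1$ (from positive definiteness of $I - iK$), indeed $\prod\mu_j^2 < 1$. ✓

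**Main obstacle:** Justifying that $C>0$ forces $|\mu_j|<1$, and handling the even-dimension pairing of eigenvalues of the antisymmetric $K$ correctly.
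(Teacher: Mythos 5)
Your proof is correct and follows essentially the same route as the paper's: both factor out $A^{1/2}$, reduce to the spectrum of the congruence-normalized middle matrix (your real antisymmetric $K=A^{-1/2}(iB)A^{-1/2}$ versus the paper's Hermitian $A^{-1/2}BA^{-1/2}$, which differ only by a factor of $i$), use positive definiteness of $C$ to force those eigenvalues into $(-1,1)$, and use antisymmetry together with the even dimension to pair them so that the determinant ratio equals $\prod_j\mu_j^2<1$. The only cosmetic difference is how the pairing is justified: you invoke the canonical $\pm i\mu_j$ spectral structure of real antisymmetric matrices, while the paper deduces it from the identity $\det(\lambda A+B)=0\Rightarrow\det(\lambda A-B)=0$.
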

\begin{proof}
By hypothesis, $C$ is positive definite, so
it is clear that $B\in H_n$ and $A$ is positive definite.
We consider the characteristic polynomial
$$\det(\lambda A+B).$$
Since $A$ is symmetric and $B$ antisymmetric, the condition $\det(\lambda A+B)=0$
implies $\det(\lambda A-B)=\det(\lambda^2 A^2-B^2)=0$, so that the characteristic roots occur in symmetric pairs.
Let $U$ be a unitary matrix such that
$$U^*{A^{-1/2}}B{A^{-1/2}}U={\rm diag}({\lambda_1,\ldots\lambda_m}).$$
Then
$$B= A^{1/2} U {\rm diag}({\lambda_1,\ldots\lambda_m})U^* A^{1/2} ,\quad A= A^{1/2} UU^*  A^{1/2}, $$
and
$$C=A+B=A^{1/2}U {\rm diag}({1+\lambda_1,\ldots1+\lambda_m})U^* A^{1/2}$$
implying that $\lambda_1,\ldots,\lambda_m\in[-1,1].$
Thus,
$${\det}(U^*{ A^{-1/2}}iB{ A^{-1/2}}U) ={\det iB\over\det A}=(-1)^{m/2}\lambda_1\ldots\lambda_m <1.$$
Observing that $(-1)^{m/2}\lambda_1\ldots\lambda_m >0$, the result follows.\end{proof}

The proof of Theorem \ref{T4.2} is a simple consequence of Lemma \ref{lemma},
observing that $\sigma({X_1,\ldots,X_m})=(\tau+\tau^T)/2$ and $\delta({X_1,\ldots,X_m})=(\tau-\tau^T)/2$.
\begin{corol} For $\delta(X_1,\ldots,X_m)$ in (\ref{delta}) and $\sigma_j^2={\rm Cov}(X_j,X_j)$
in (\ref{Cov}),
$$\prod_{j=1}^m\sigma_j^2\geq\det i\delta(X_1,\ldots,X_m).$$
\end{corol}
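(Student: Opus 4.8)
The plan is to deduce this from Theorem~\ref{T4.2} together with Hadamard's determinant inequality. The starting point is the observation that $\sigma(X_1,\ldots,X_m)$ is a real symmetric positive semidefinite matrix. Indeed, $\sigma=(\tau+\tau^T)/2$ is the symmetric part of the Gram matrix $\tau$, which was shown in the text to be Hermitian and positive semidefinite with respect to the Hilbert--Schmidt inner product. Its entries $\mathrm{Cov}(X_j,X_k)$ are real, so $\sigma$ is real and symmetric, and for every real vector $z\in\R^m$ one has $z^T\sigma z=\mathrm{Re}(z^*\tau z)=z^*\tau z\geq0$, since $z^*\tau z$ is real and nonnegative. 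Hence $\sigma$ is positive semidefinite.

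The second ingredient is Hadamard's inequality, which asserts that for a positive semidefinite matrix the determinant does not exceed the product of its diagonal entries. Applying it to $\sigma$ and recalling that its $j$-th diagonal entry is $\mathrm{Cov}(X_j,X_j)=\sigma_j^2$, I obtain
$$\det\sigma(X_1,\ldots,X_m)\leq\prod_{j=1}^m\sigma_j^2.$$

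It then suffices to chain this with Theorem~\ref{T4.2}, which gives $\det\sigma(X_1,\ldots,X_m)>\det i\delta(X_1,\ldots,X_m)$ under the standing hypotheses ($m$ even and $\tau=\sigma+\delta$ positive definite). Combining the two displays yields $\prod_{j=1}^m\sigma_j^2\geq\det\sigma(X_1,\ldots,X_m)>\det i\delta(X_1,\ldots,X_m)$, which is the asserted inequality, in fact with a strict sign, the nonstrict statement covering also the degenerate semidefinite case.

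The argument is short, and the only point requiring care is the justification that $\sigma$ is positive semidefinite, which is precisely what licenses the use of Hadamard's inequality; the inequality itself and the appeal to Theorem~\ref{T4.2} are then immediate. One should also verify that the hypotheses of Theorem~\ref{T4.2} are in force, namely that $m$ is even and that $\tau$ is positive definite. If $\tau$ is only positive semidefinite, a limiting argument (replacing $\tau$ by $\tau+\varepsilon I$ and letting $\varepsilon\to0^{+}$) delivers the nonstrict inequality directly, which is consistent with the $\geq$ sign in the statement.
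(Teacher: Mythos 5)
Your proposal is correct and follows essentially the same route as the paper: establish that the covariance matrix $\sigma(X_1,\ldots,X_m)$ is positive (semi)definite, apply the Hadamard determinantal inequality to get $\det\sigma(X_1,\ldots,X_m)\leq\prod_{j=1}^m\sigma_j^2$, and then chain with Theorem \ref{T4.2}. You are in fact somewhat more careful than the paper's terse proof, since you justify the positive semidefiniteness of $\sigma$ as the real part of the Gram matrix $\tau$ and handle the degenerate case by a perturbation $\tau+\varepsilon I$, whereas the paper simply asserts positive definiteness.
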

\begin{proof}Notice that $\sigma(X_1,\ldots,X_m)$ in (\ref{sigma}) is positive definite.
From Theorem (\ref{T4.1}) and Hadamard determinantal
inequality, we obtain
$$\prod_{j=1}^m\sigma_j^2\geq\det \sigma(X_1,\ldots,X_m),$$
and the result follows.
\end{proof}

\subsection{$\alpha$-variance}
The $\alpha$-{\it expectation value} of the Hermitian operator $A$ has been defined as
$$\langle A\rangle_\alpha={1\over\alpha-1}\log{\Tr\rho^\alpha\e^{(\alpha-1)A}\over\Tr\rho^{\alpha}},$$
where $\rho\in H_{n,+,1}.$
If $A>0$, then $\langle A\rangle_\alpha>0.$
The $\alpha$ expectation value is strongly non linear. We observe that, for $A,B\in H_n,~~\lambda\in\R,$
we have, in general,
$$\langle \gamma A\rangle_\alpha\neq\gamma\langle A\rangle_\alpha$$
and
$$\langle A+B\rangle_\alpha\neq\langle A\rangle_\alpha+\langle B\rangle_\alpha$$
except for
$$\langle\gamma I_n\rangle_\alpha=\gamma=\gamma\langle I_n\rangle_\alpha$$
and
$$\langle A+\gamma I_n\rangle_\alpha=\langle A\rangle_\alpha+\gamma.$$
Notice that $\langle(A-\langle A\rangle_\alpha I_n)\rangle_\alpha=0$ and that
$(A-\langle A\rangle_\alpha I_n)^2>0.$
The $\alpha$-{\it variance} in the measurement of $A$ may be naturally defined as
$$\sigma_{A,\alpha}^2=\Tr\rho_0(A-\langle A\rangle_\alpha I_n)^2.
$$
This definition is consistent with the one for $\alpha=1$, since $\lim_{\alpha\rightarrow1}\sigma_{A,\alpha}^2=\sigma_A^2.$
The derivation and physical interpretation of inequalities
analogous to (\ref{****}) and (\ref{m op}), for $\alpha\in(0,1)$, remains an open problem.
\begin{Exa}
We consider the Hamiltonian
$$H={\rm diag} ({3, 2, 4, 1, 5, 9, 2, 6, 5, 3, 5, 9})\in M_{12},$$
and compute, in the equilibrium state, for $\beta=1$, the $\alpha$-expectation value
and the $\alpha$-standard deviation
for $\alpha\in\{0,1/2,1,2,\infty\}.$ We have found the following values

\noindent
$\alpha=0$,~~
$\langle H\rangle_\alpha=2.73416$,
$\sigma_{H,\alpha}=1.325389$.\\
$\alpha=1/2$, $\langle H\rangle_\alpha=2.11338$,
$\sigma_{H,\alpha}=1.02608$.\\
$\alpha=1$,~~ $\langle H\rangle_\alpha=1.79549$,
$\sigma_{H,\alpha}=1.39549$.\\
$\alpha=2$,~~ $\langle H\rangle_\alpha=1.48008$,
$\sigma_{H,\alpha}=1.02531.$\\
$\alpha=\infty$,~ $\langle H\rangle_\alpha=1$,
$\sigma_{H,\alpha}=1.2588$.\\

\noindent
The equilibrium free energy $F_\alpha(\rho_0,H)=0.249258$ does not depend on $\alpha$,
so, the entropy of the equilibrium state  $S_\alpha(\rho_0)=\langle H\rangle_\alpha+F_\alpha(\rho_0,H)$
has also been determined.   For $\alpha=\infty$, $\langle H\rangle_\alpha$ is equal to the lowest eigenvalue of $H$.
We notice that $\langle H\rangle_\alpha$ decreases as $\alpha$ increases,
and that the $\alpha$-standard deviation of the measurement of $H$ is highest for $\alpha=1$.
\end{Exa}
\section{Discussion}\label{S7}
We have presented self-contained proofs of fundamental inequalities in the setting
of R\'enyi's statistical thermodynamics, which is formulated through the replacements,
of $\langle\beta H\rangle_1$  and of $S_1(\rho)$,
in the expression of the free energy, respectively, by
$\langle\beta H\rangle_\alpha$ and $S_\alpha(\rho)$, for $\alpha$ a parameter in $(0,1)\cup(1,\infty)$.
Definitions for thermodynamical quantities, such as free energy, entropy and partition function were given.
We adopted the paradigm in \cite{misra,skrzypczyk} for dealing with thermodynamical processes in the
framework of quantum theory. 
By assuming
the laws of thermodynamics, the equilibrium state of a given system
is determined. The R\'enyi MaxEnt principle has been stated and the
equilibrium state has been determined.

Uncertainty relations have been revisited in the present context.
It has been shown that the product of the uncertainties on the measurements of an even number of
observables can not be less than a certain function of their commutators.
This extends the uncertainty principles of Heisenberg and its refinement by Schr\"odinger,
who introduced the correlations of two observables. The statement of these principles
in R\'enyi's statistical thermodynamics is an open problem.

Different types of uncertainty relations have been considered.
There are many ways to quantify the uncertainties of  measurements.
The lower bound in the Heisenberg-Robertson formulation can happen to be zero,
and so having a global state independent lower bound may be desirable.

Entropic uncertainty
relations have significant importance within quantum information
providing the foundation for the security of quantum cryptographic protocols.
Using majorization techniques, explicit lower bounds for the sum of
R\'enyi entropies describing probability distributions have been derived.
Some results admit generalizations to arbitrary mixed states.

For $\alpha\in(0,1)\cup(1,\infty)$ and $\rho,\sigma\in H_{n,+}$, the {\it sandwiched}
$\alpha$-RRE is defined as
$${\cal D}_\alpha(\rho\|\sigma)={1\over\alpha-1}\log\left(\Tr\left(\sigma^{(1-\alpha)\over\alpha}
\rho\sigma^{(1-\alpha)\over\alpha}\right)^\alpha\right)$$
and reduces to the $\alpha$-RRE when $\alpha$ and $\rho$ commute. The problems we have discussed
may also be considered in the context of this entropy.

A demanding avenue of research is the study of operator R\'enyi entropic inequalities.

\section*{Acknowledgments} This work was partially
supported 
by the European Regional Development Fund through the
program COMPETE and by the Portuguese
Government through the FCT - 
{Funda\c c\~ao} para a Ci\^encia e a Tecnologia under the projects
PEst-C/MAT/UI0324/2011
and UID/FIS/04564/2016.

\end{document}